\newtheorem{theorem}{Theorem}[section]
\newtheorem{lemma}[theorem]{Lemma}
\newtheorem{proposition}[theorem]{Proposition}
\newtheorem{corollary}[theorem]{Corollary}
\newtheorem{conjecture}[theorem]{Conjecture}
\newcommand{\build}[3]{\mathrel{\mathop{\kern 0pt#1}\limits_{#2}^{#3}}}
\def\SU{{\mathrm{SU}}}
\def\U{{\mathrm U}}
\def\Z{{\mathbb Z}}
\def\tr{\mathrm{tr}}
\def\R{\mathbb{R}}
\def\C{\mathbb{C}}
\def\Tr{\mathrm{Tr}}
\def\vol{\mathrm{vol}}
\def\E{{\mathbb{E}}}
\def\Hom{{\mathrm{Hom}}}
\def\geq{\geqslant}
\def\leq{\leqslant}
\providecommand{\keywords}[1]
{
  \small
  \textbf{\textit{Keywords---}} #1
  \normalsize
}
\title{Almost flat highest weights and application to Wilson loops on compact surfaces}
\author{Thibaut Lemoine\thanks{Coll\`ege de France. E-mail: \texttt{thibaut.lemoine@college-de-france.fr}}}
\begin{document}
\maketitle

\begin{abstract}
We derive new formulas for the expectation and variance of Wilson loops for any contractible simple loop on a compact orientable surface of genus $1$ and higher, in the model of two-dimensional Yang--Mills theory with structure group $\U(N)$. They are written in terms of a Gaussian measure on the dual of $\U(N)$ introduced recently by the author and M. Ma\"ida \cite{LM3}. From these formulas, we prove a quantitative result on the convergence of the expectation and variance as $N$ tends to infinity, refining a result of \cite{DL}. We finally derive the large $g$ limit of the Wilson loop expectation and variance, by analogy with the study of integrals on moduli spaces of compact hyperbolic surfaces. Surprisingly, the variance does not vanish in this regime, but there are no nontrivial fluctuations of any order.
\end{abstract}

\keywords{Almost flat highest weights, Asymptotic representation theory, Two-dimensional Yang--Mills theory, Wilson loops}

\section{Introduction}

\subsection{Yang--Mills measure and Wilson loops}

The two-dimensional quantum Yang--Mills theory is a toy model for the quantum gauge theory in four dimensions used notably in the Standard model of particle physics. It was started by Migdal \cite{Mig}, then developed mainly by Driver \cite{Dri}, Witten \cite{Wit}, Xu \cite{Xu}, Sengupta \cite{Sen4,Sen3} and L\'evy \cite{Lev3}. It can be described as a measure on the connections on a $G$-principal bundle modulo gauge transformations, and it can be reduced to a random matrix model thanks to the holonomy map. Let us briefly describe this construction, borrowed from the approach of L\'evy \cite{Lev2}. Consider a connected closed surface $\Sigma$ endowed with an area measure $\vol$, a compact group $G$ such that its Lie algebra $\mathfrak{g}$ is endowed with a $G$-invariant inner product, and an oriented topological map on $\Sigma$, that is, an oriented graph $\mathbb{G}=(\mathbb{V},\mathbb{E},\mathbb{E}^+,\mathbb{F})$ such that all faces of $\mathbb{G}$ are homeomorphic to disks. The \emph{Yang--Mills holonomy field} is a $G$-valued stochastic process $(H_\ell)_{\ell\in\mathscr{P}(\mathbb{G})}$ indexed by the set of paths in $\mathbb{G}$ obtained by concatenation of edges and their inverses. The distribution of this process can be described in terms of random variables taking values in the configuration space $\mathscr{C}_\mathbb{G}^G=G^{\mathbb{E}^+}.$ The \emph{holonomy function} is defined for a loop $\ell=e_1^{\varepsilon_1}\cdots e_n^{\varepsilon_n}$ as
\[
h_\ell:\left\lbrace \begin{array}{lll}
\mathscr{C}_\mathbb{G}^G & \to & G\\
\mathbf{g} & \mapsto & \mathbf{g}_{e_n}^{\varepsilon_n}\cdots \mathbf{g}_{e_1}^{\varepsilon_1},
\end{array}\right.
\]
and the configuration space is endowed with the smallest sigma-algebra that makes $h_\ell$ measurable for any $\ell\in\mathscr{P}(\mathbb{G})$. We also denote by $(p_t)_{t\geq 0}$ the heat kernel on $G$ for the chosen metric on $\mathfrak{g}$. For any loop $\ell$, the distribution of the random variables $H_\ell$ is given by the \emph{Driver--Sengupta formula}
\begin{equation}\label{eq:DS000}
\mathbb{E}[f(H_\ell)] = \frac{1}{Z_G(g,T)}\int_{G^{\mathbb{E}^+}} f(h_\ell(\mathbf{g})) \prod_{F\in\mathbb{F}} p_{\vol(F)}(h_{\partial F}(\mathbf{g}))d\mathbf{g},
\end{equation}
named after the two mathematicians who derived it initially: Driver \cite{Dri} and Sengupta \cite{Sen4,Sen3}. The quantity $Z_G(g,T)$ is a normalisation constant, called \emph{partition function}, which only depends on the structure group $G$, the genus $g$ of the surface and its total area $T$. In this paper, we denote by $Z_N(g,T)$ the partition function $Z_{\U(N)}(g,T)$. Using a sort of Kolmogorov extension theorem, it can be proved that this process is the finite-dimensional marginal of a more general process $(H_\ell)_{\ell\in\mathscr{P}(\Sigma)}$ indexed by a given set of paths $\mathscr{P}(\Sigma)$ on the underlying surface. However, when considering functionals of the holonomy along a simple loop, the discrete setting is sufficient, by embedding the path in a topological map.\\

The holonomy function yields another interpretation of the Yang--Mills measure, related to the work of Atiyah--Bott \cite{AB83} and Goldman \cite{Gol}, as described by Witten \cite{Wit}: if $\Gamma=\Gamma(\mathbb{G})$ is the group of reduced loops on $\mathbb{G}$ with base $v\in V$ \cite{Lev2}, the discrete Yang--Mills measure on $\mathbb{G}$ is a non-uniform measure on the moduli space
\[
\Hom(\Gamma,G)/G.
\]
For instance, if $\mathbb{G}$ is a map with one face embedded in a compact surface $\Sigma$ of genus $g$, then $\Gamma(\mathbb{G})=\pi_1(\Sigma,v)$. This point of view draws a parallel with the integration theory on moduli spaces of Riemann surfaces, like the model of Weil--Petersson developed by Mirzakhani for hyperbolic surfaces \cite{Mir}, or the model of random representations of surface groups developed by Magee and Puder \cite{Mag,MP23}. In both models, the most relevant results are obtained in an asymptotic regime: when the genus is large in the Weil--Petersson case, and when the group is large in the case of random representations. See for instance \cite{Wri} for an overview of Mirzakhani's work on the Weil--Petersson integration, and \cite{LW2} for an overview of recent results about the large genus regime.\\

The main observables of Yang--Mills theory are complex random variables given by $\tr(H_\ell)$ for arbitrary loops $\ell$, and are called \emph{Wilson loops}. In the physics literature, Wilson loop expectations are mainly studied, but in this paper we will also consider the variance of Wilson loops. Given a simple loop $\ell$ on a closed surface $\Sigma$ of genus $g\geq 1$, \emph{i.e.} a loop without self-intersection, there are two possibilities: either $\Sigma\setminus\ell$ is connected and $\ell$ is said to be \emph{nonseparating}, or $\Sigma\setminus\ell$ contains at least two connected components and $\ell$ is said to be \emph{separating}. It is known (cf. \cite[§6.3]{Sti}) that any nonseparating loop is canonically homeomorphic to an edge of the fundamental domain\footnote{It means in particular that such a loop can be completed into a set of generators of $\pi_1(\Sigma)$.} of $\Sigma$ and that any separating loop splits $\Sigma$ into two components, which are homeomorphic to compact connected orientable surfaces with boundary, and these surfaces have respectively genus $g_1$ and $g_2$ such that $g_1+g_2=g$. Furthermore, $\ell$ is contractible if and only if it is separating and $g_1=0$ or $g_2=0$. In the present paper, we shall focus on the case of contractible simple loops: we will provide exact and asymptotic formulas for the first moments of their Wilson loops in terms of particular irreducible representations of $G$.\\

We will consider $G$ to be the unitary group $\U(N)$ for two main reasons: it is expected to behave nicely when $N$ tends to infinity, since the seminal work of \ 't Hooft \cite{Hoo}, and it permits to describe the distribution of the Yang--Mills holonomy field in terms of Young diagrams and Schur functions, which have particularly good combinatorial properties. The case of $\SU(N)$ can be done with the same tools as $\U(N)$ and we only omit its treatment to keep the paper of reasonable length; the formulas would be pretty similar and the arguments unchanged.

\subsection{Almost flat highest weights}

The irreducible representations of $\U(N)$ are labeled by nonincreasing $N$-tuples of integers $\lambda=(\lambda_1\geq\cdots\geq \lambda_N)$ called \emph{highest weights}. We denote by $\widehat{\U}(N)$ the set of highest weights. We also define:
\begin{enumerate}
\item The \emph{character} of the representation
\[
\chi_\lambda(U)=\chi_\lambda(x_1,\ldots,x_N),\ \forall U\sim\mathrm{diag}(x_1,\ldots,x_N)\in\U(N),
\]
\item The \emph{dimension} of the representation, which is $d_\lambda = \chi_\lambda(1,\ldots,1)$,
\item The \emph{Casimir number} of the representation, which is the nonnegative number $c_2(\lambda)$ such that
\[\Delta \chi_\lambda = -c_2(\lambda)\chi_\lambda.\]
\end{enumerate}
The \emph{character expansion} of the heat kernel is given by\footnote{See for instance \cite[Thm. 4.2]{Lia}.}
\begin{equation}\label{eq:HKdecomp}
p_t(U)=\sum_{\lambda\in\widehat{\U}(N)} e^{-c_2(\lambda)\frac{t}{2}}d_\lambda \chi_\lambda(U),~\forall T> 0,~\forall U\in\U(N).
\end{equation}

Several asymptotic estimations of the heat kernel on the unitary group have already been obtained \cite{Bia,Lev5,LM2}. Most of them are sufficient to study Wilson loops on the plane \cite{AS,Lev}, but not on compact surfaces. For instance, the study of the partition function on a sphere was associated to the study of a Brownian bridge on $\U(N)$ \cite{LM,DN}, and the one on higher genus compact surfaces was related to particular highest weights called \emph{almost flat} \cite{Lem}. Such highest weights are in fact couplings of Young diagrams of finite size. Their typical shape exhibits a large plateau, hence the name `almost flat' -- see Fig. \ref{fig:lambdaabc}.
\begin{figure}[h!]
    \centering
    \includegraphics[width=0.6\linewidth]{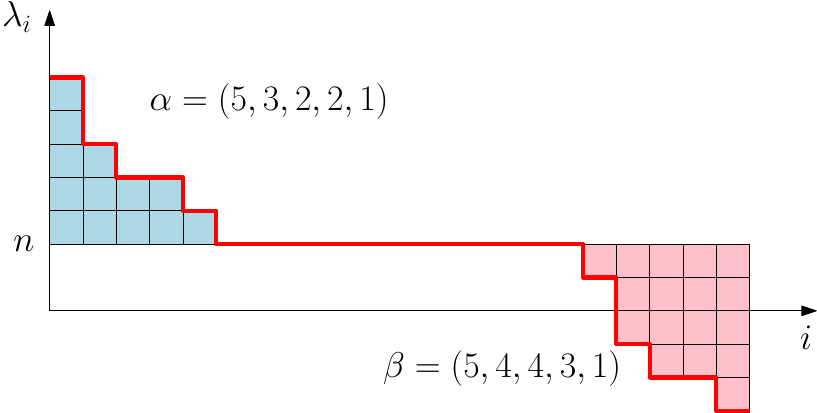}
    \caption{\small Almost flat highest weight of $\U(N)$.}
    \label{fig:lambdaabc}
\end{figure}
They can be somehow related to the so-called stable representations of the unitary group . In this case, if we denote $\lambda_N(\alpha,\beta,n)$ such highest weight, one can prove that
\[
c_2(\lambda(\alpha,\beta,n))= \vert\alpha\vert + \vert\beta\vert+n^2 + O(N^{-1}),
\]
where we noted $\vert\alpha\vert = \sum_i \alpha_i$. This formula can be interpreted as a sort of decoupling of the partitions $\alpha$, $\beta$ and $(n,\ldots,n)$, and plays a crucial role in the study of the Yang--Mills partition function on compact surfaces. This coupling/decoupling was recently further analyzed by the author and M. Ma\"ida  to get an asymptotic expansion of the partition function for $g=1$. To this end, they introduced a Gaussian measure on $\widehat{\U}(N)$ defined by
\[
\mathbb{P}_{\widehat{\U}(N),T}(\{\lambda\}) = \frac{1}{Z_N(1,T)}e^{-\frac{T}{2}c_2(\lambda)}, \quad \forall \lambda\in\widehat{\U}(N).
\]
This measure will prove useful as well for the study of Wilson loops. We will denote by $\E_{\widehat{\U}(N),T}$ the associated expectation.

\subsection{Main results}

If $\ell$ is a contractible simple loop on a surface $\Sigma_{g,T}$, then it is the boundary of a topological disk $D$ of area $t\in(0,T)$, \emph{i.e.} a two-dimensional topological manifold with boundary homeomorphic to a closed disk; $t$ will be called the \emph{interior area} of $\ell$. If we remark that $\Sigma_{g,T}\setminus D$ is homeomorphic to a surface $\Sigma'$ with boundary, then we can choose a set of generators $\{a_1,b_1,\ldots,a_g,b_g\}$ of $\pi_1(\Sigma')$, and we can pull them back by homeomorphism into generators of $\pi_1(\Sigma_{g,T})$. By taking the base point $v_1$ of these generators and the base point $v_2$ of $\ell$, and considering a simple curve $e$ from $v_1$ to $v_2$, we have that $\{a_1,b_1,\ldots,a_g,b_g,e,\ell\}$ is the set of edges of a graph $\mathbb{G}$ with two faces of respective areas $t$ and $T-t$. Such a graph is illustrated in Fig. \ref{fig:lacet-simple-disque} for a surface $\Sigma_{2,T}$ of genus $2$.\\

\begin{figure}[!h]
\centering
\includegraphics[scale=0.8]{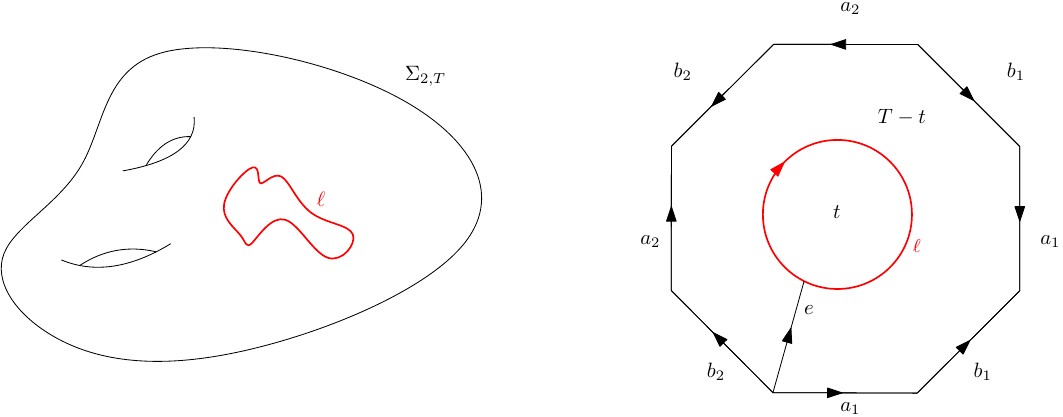}
\caption{\small An contractible simple loop $\ell$ (on the left) and the oriented graph associated to it (on the right).}\label{fig:lacet-simple-disque}
\end{figure}

If $\ell$ is a simple loop of interior area $t$, then we can compute its Wilson loop expectation using Driver--Sengupta formula \eqref{eq:DS000}:

\begin{equation}\label{eq:loop1}
\E[\tr(H_\ell)]=\frac{1}{Z_N(g,T)}\int_{G^{2g+1}} \tr(x) p_t(x^{-1}) p_{T-t}(x[y_1,z_1]\cdots[y_g,z_g])dx\prod_{i=1}^gdy_idz_i,
\end{equation}
where $(p_t)_{t\in\R_+}$ is the heat kernel on $G$.\\

We can also define the \emph{Wilson loop variance} as the variance of Wilson loop functional -- as tautological as it seems:
\[
\mathrm{Var}[\tr(H_\ell)] = \mathbb{E}[|\tr(H_\ell)|^2]-|\mathbb{E}[\tr(H_\ell)]|^2.
\]
Hence, the variance can be explicitly computed as long as we know the expectations $\mathbb{E}[\tr(H_\ell)]$ and $\mathbb{E}[\vert\tr(H_\ell)\vert^2]$. These two quantities will therefore be the main protagonists of this paper. We will also need an additional notation: if $\lambda$ and $\mu$ are two highest weights, we will write $\mu\nearrow\lambda$ (or $\lambda\searrow\mu$) if $\lambda$ can be obtained from $\mu$ by adding $1$ to one of its parts\footnote{In the language of diagrams, it means that we add a box to a positive part or remove one to a negative part.}, and $\mu\sim\lambda$ if $\lambda$ can be obtained from $\mu$ by adding $1$ to one of its parts and $-1$ to one of its parts\footnote{it can be the same one!}, \emph{i.e.} if there exists a highest weight $\nu$ such that $\lambda\nearrow\nu$ and $\mu\nearrow\nu$. These branching rules are illustrated in Fig. \ref{fig:diagramme-signature3}. The same notations are also used for branching rules over integer partitions.

\begin{figure}[h!]
\centering
\includegraphics[scale=0.7]{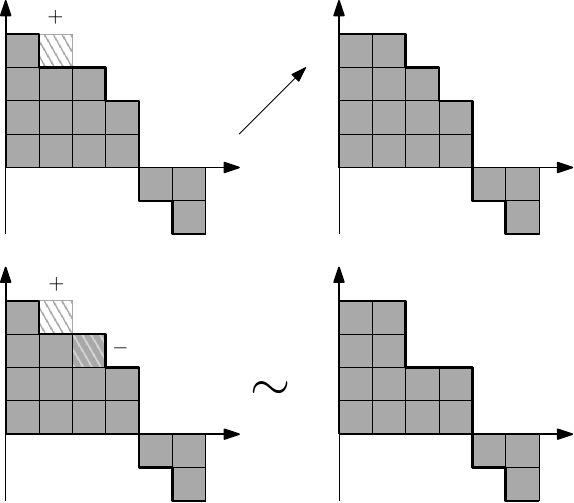}
\caption{\small On the first row: we have $\lambda\nearrow\mu$, with $\lambda$ on the left and $\mu$ on the right. On the second row: we have $\lambda\sim\mu$, with $\lambda$ on the left and $\mu$ on the right.}\label{fig:diagramme-signature3}
\end{figure}

Finally, let us introduce quantities that will follow us throughout this paper: given $0<t<T$, we set $q_t=e^{-\frac{t}{2}}$. Some formulas will be more convenient to write in terms of $q_t$ and $q_T$ instead of $t$ and $T$; this is due to a difference of terminology between Gaussian measures, which are written in terms of exponential terms, and generating series of theta and phi functions, which are rather written in terms of a complex variable $q\in\C$ such that $\vert q\vert<1$.

\begin{theorem}\label{prop:wilson_loops_exp_var}
Let $\Sigma_{g,T}$ be an orientable compact connected surface of genus $g\geq 1$ and of area $T$, $\ell$ be a contractible simple loop of interior area $t$ oriented clockwise. We have the following formulas:
\begin{align}
\mathbb{E}[\tr(H_\ell)]=& \frac{Z_N(1,T)}{Z_N(g,T)}\E_{\widehat{\U}(N),T}\left[\sum_{\substack{\mu\in\widehat{\U}(N): \mu\searrow\lambda}}\frac{d_\mu}{Nd_\lambda^{2g-1}}q_t^{c_2(\mu)-c_2(\lambda)} \right],\label{eq:wilson_loop_exp_sun}\\
\mathbb{E}[\vert\tr(H_\ell)\vert^2]=& \frac{Z_N(1,T)}{Z_N(g,T)}\E_{\widehat{\U}(N),T}\left[\sum_{\substack{\mu\in\widehat{\U}(N): \mu\sim\lambda}}\frac{d_\mu}{N^2d_\lambda^{2g-1}}q_t^{c_2(\mu)-c_2(\lambda)} \right].\label{eq:wilson_loop_var_sun}
\end{align}
\end{theorem}

This is a fairly general result; let us specialize it to $\U(1)$. In this case, the branching rules are quite degenerate: the only $\mu\searrow (n)$ for $n\in\Z\simeq\widehat{\U}(1)$ is $(n+1)$, and the only $\mu\sim(n)$ is $(n)$ itself. Furthermore, all irreducible representations have dimension 1. Hence, $Z_1(1,T)=Z_1(g,T)$ for all $g\geq 1$, and
\begin{equation}\label{eq:u1}
\E[\tr(H_\ell)]=\E_{\widehat{\U}(1),T}[e^{-nt-\frac{t}{2}}],\quad \E[\vert\tr(H_\ell)\vert^2] = 1.   
\end{equation}
It appears therefore that in the abelian case, the topology of the underlying surface has no effect on the Wilson loop.\\

From Theorem \ref{prop:wilson_loops_exp_var}, we are able to derive asymptotics in large $N$ and large $g$ limits. In the case of large $N$, we first obtain the following.

\begin{theorem}\label{thm:exp}
Let $\Sigma_{g,T}$ be an orientable compact connected surface of genus $g\geq 1$ and of area $T>0$, and $\ell$ be a contractible simple loop of interior area $t<T$. The associated Wilson loop expectation converges, as $N\to\infty$, and its limit is
\begin{equation}\label{eq:limexp}
\lim_{N\to\infty}\mathbb{E}[\tr(H_\ell)]=e^{-\frac{t}{2}}.
\end{equation}
Furthermore, we have the following fluctuations:
\begin{itemize}
\item if $g\geq 2$,
\begin{equation}\label{eq:fluct_exp}
\E[\tr(H_\ell)] = e^{-\frac{t}{2}} + O(N^{-2}).
\end{equation}
\item If $g=1$, then for any $\varepsilon>0$ small enough,
\begin{equation}\label{eq:fluct_exp2}
\E[\tr(H_\ell)] = e^{-\frac{t}{2}} + O(N^{-1+\varepsilon}).
\end{equation}
\end{itemize}
\end{theorem}

\begin{theorem}\label{thm:var}
Let $\Sigma_{g,T}$ be an orientable compact connected surface of genus $g\geq 1$ and of area $T>0$, and $\ell$ be a contractible simple loop of interior area $t<T$. The associated Wilson loop variance satisfies the following limit as $N\to\infty$:
\begin{equation}\label{eq:limvar}
\lim_{N\to\infty}\mathrm{Var}[\tr(H_\ell)]=0.
\end{equation}
Furthermore, we have the following fluctuations:
\begin{itemize}
\item if $g\geq 2$,
\begin{equation}\label{eq:fluct_var}
\mathrm{Var}[\tr(H_\ell)] = O(N^{-2}).
\end{equation}
\item If $g=1$, then for any $\varepsilon>0$ small enough,
\begin{equation}\label{eq:fluct_var2}
\mathrm{Var}[\tr(H_\ell)] = O(N^{-1+\varepsilon}).
\end{equation}
\end{itemize}
\end{theorem}

Theorems \ref{thm:exp} and \ref{thm:var} imply that the Wilson loops $W_\ell$ converge in probability to the limit given by \eqref{eq:limexp}. Note that the limit is the same as if the loop was embedded in the plane, which was already observed in \cite{DL}. A full asymptotic expansion of the Wilson loop expectation should be within reach, provided that an expansion of the partition function exists. We shall investigate this in the future, relying on recent and upcoming progresses on the partition function \cite{LM3,LM4}. Note that the fluctuations in the case of $g=1$ might not be sharp, and we conjecture that they should be of order at least $\frac1N$; however, the techniques developed in the present paper are not sufficient to prove it. The main limiting factor is Proposition \ref{prop:GTbis}.\\

In the large $g$ regime, we obtain analogs of Theorems \ref{thm:exp} and \ref{thm:var}, with a few differences. As we have seen in \eqref{eq:u1}, there is no dependence in $g$ when the structure group is $\U(1)$, hence we will only consider here $\U(N)$ with $N\geq 2$.

\begin{theorem}\label{thm:large_g}
Let $\Sigma_{g,T}$ be an orientable compact connected surface of genus $g\geq 1$ and of area $T>0$, and $\ell$ be a contractible simple loop of interior area $t<T$. For any fixed integers $N\geq 2$ and $k\geq 1$, as $g\to\infty,$
\begin{equation}\label{eq:exp_largeg}
\E[\tr(H_\ell)] = \frac{1}{\theta(q_T)} \sum_{n\in\Z} e^{-\frac{T}{2}n^2-\frac{tn}{N}-\frac{t}{2}} + O(g^{-k}).
\end{equation}

\begin{equation}\label{eq:var_largeg}
\E[\vert\tr(H_\ell)\vert^2] = \frac{1}{N^2}+ e^{-t} + O(g^{-k}).
\end{equation}
In particular, the variance of Wilson loops does not vanish in the large $g$ limit.
\end{theorem}




\subsection{Related works}

\subsubsection{Yang--Mills theory}

The limits obtained in Theorems \ref{thm:exp} and \ref{thm:var} were conjectured in a paper by Hall:

\begin{conjecture}[\cite{Hal2}]
If $\Sigma$ is a compact surface of area $T$ and $C$ is a simple loop in a topological disk, with interior area $t<T$, then the limit $\E[\tr(H_\ell)]$ exists for $\U(N)$ as $N\to\infty$, and it depends continuously on $t$ and $T-t$. Moreover, $\lim\mathrm{Var}(\tr(H_\ell))=0$.
\end{conjecture}

The first proof of this conjecture was given in \cite{DL}, where Dahlqvist and the author extended the result to any contractible loop (possibly with self-intersections) and any classical group. The proofs of \cite{DL} relied on absolute continuity arguments, and estimates of moments of Brownian motions on $\U(N)$. Theorem \ref{thm:exp} and \ref{thm:var} provide another proof of Hall's conjecture, which is more direct, and we improve the convergence results of \cite{DL} with the estimates \eqref{eq:fluct_exp} and \eqref{eq:fluct_exp2}. The results of \cite{DL} have been extended in \cite{DL2} to general loops in a torus, and a large class of loops in higher genus surfaces, but the case of general loops for $g\geq 2$ remains an open problem.\\

If the compact surface is replaced by the plane, the convergence and fluctuations of Wilson loops have been successfully studied through free probability and Schur--Weyl duality \cite{AS,Xu,Lev}. More recently, Park--Pfeffer--Sheffield--Yu \cite{PPSY} gave a new proof of the convergence of Wilson loop expectations by using techniques from random surfaces. Note that in the case of the plane, the convergence of Wilson loops for general loops has already been completely solved in \cite{Lev}.\\

Let us finally mention a few recent mathematical developments of Yang--Mills theory in other directions: SPDE techniques have been applied to the Langevin dynamics of Yang--Mills measure in two dimensions \cite{CCHS,BC}, as well as lattice Yang--Mills coupled with a Higgs field \cite{SZZ}, and a lot of progress has also been done in lattice Yang--Mills in higher dimensions \cite{CaoChatt,CPS,BCS}.

\subsubsection{Random representations of surface groups}

Analogous results have been obtained in the model of random representations of $\Gamma_g=\pi_1(\Sigma_g)$ for a compact surface $\Sigma_g$ of genus $g\geq 2$, with values in $G=S_N$ \cite{MP23} or $G=\SU(N)$ \cite{Mag}. Given a loop $\gamma$ defined up to homotopy, we set
\[
H_\gamma:\rho\in\Hom(\Gamma_g,G)\mapsto \rho(\gamma).
\]
This is the analog of the holonomy map in Yang--Mills theory, and the associated (unnormalized) Wilson loop
\[
\Tr(H_\gamma):\rho\mapsto \Tr(\rho(\gamma))
\]
descends to a function on the moduli space $\Hom(\Gamma_g,G)/G$. When $G=S_N$, the Wilson loops are integrated with respect to the uniform probability measure on the finite set $\Hom(\Gamma,S_N)$; when $G=\SU(N)$, they are integrated with respect to the Atiyah--Bott--Goldman measure, which is the measure inherited from the symplectic form introduced by Goldman \cite{Gol} on the moduli space of flat $\SU(N)$-connections, and which can be seen as the weak limit of the Yang--Mills measure on a compact surface of genus $g\geq 2$ and area $t$ as $t\to 0$ \cite{Wit,Sen6}.

\begin{theorem}[\cite{MP23}]
If we denote by $\E_{g,N}$ the expectation with respect to the uniform measure on $\Hom(\Gamma_g,S_N)$, for any $\gamma\in\Gamma_g$ there exist an infinite sequence of rational numbers
\[
a_1(\gamma),a_0(\gamma),a_{-1}(\gamma),\ldots
\]
such that for any $k\geq 1$, as $N\to\infty$,
\begin{equation}
\E_{g,N}[\Tr(H_\gamma)]=a_1(\gamma)N+a_0(\gamma)+\frac{a_{-1}(\gamma)}{N}+\cdots+O(N^{-k}).
\end{equation}
Furthermore, if $\gamma\in\Gamma_g$ is not the identity, then $a_1(\gamma)=0$.
\end{theorem}

\begin{theorem}[\cite{Mag}]
If we denote by $\E'_{g,N}$ the expectation with respect to the Atiyah--Bott--Goldman measure, for any $\gamma\in\Gamma_g$ there exist an infinite sequence of rational numbers
\[
a_1(\gamma),a_0(\gamma),a_{-1}(\gamma),\ldots
\]
such that for any $k\geq 1$, as $N\to\infty$,
\begin{equation}
\E'_{g,N}[\Tr(H_\gamma)]=a_1(\gamma)N+a_0(\gamma)+\frac{a_{-1}(\gamma)}{N}+\cdots+O(N^{-k}).
\end{equation}
\end{theorem}

\begin{theorem}[\cite{MagII}]
Under the assumptions of the previous theorem, if $\gamma\in\Gamma_g$ is not the identity, then $a_1(\gamma)=0$.
\end{theorem}

In a certain way, the results of Magee and Puder focus on a problem that is orthogonal to ours: all loops considered in the current paper are homotopic to the identity, which has a trivial Wilson loop in their model, whereas they rather study loops with nontrivial homotopy -- which we do not consider here. In fact, combining results of the present paper with those of \cite{Mag} are probably sufficient to estimate the speed of convergence of Wilson loops for all simple loops for the Yang--Mills measure on a compact surface of genus $g\geq 2$.

\subsubsection{Weil--Petersson model of random hyperbolic surfaces}

In the Weil--Petersson model, the representation space of a compact hyperbolic surface $\Sigma$ of genus $g$ with $n$ geodesic boundary components is the \emph{Teichm\"uller space}
\[
\mathcal{T}_{g,n} = \Hom(\pi_1(\Sigma),\mathrm{PSL}(2,\R))^{*}/\mathrm{PSL}(2,\R).
\]
It is the open subset of the full representation space consisting in discrete injective representations of $\pi_1(\Sigma)$. The geodesic lengths $L_1,\ldots,L_n$ of the boundary components define a set of parameters for the corresponding Teichm\"uller space, and we write $\mathcal{T}_{g,n}(L_1,\ldots,L_n)$ to underline the dependence on the geodesic lengths. The representation space admits a symplectic volume form $\mathrm{Vol}_{g,n}$ that defines the \emph{Weil--Petersson measure} on the Teichm\"uller space. If one quotients the Teichm\"uller space by the mapping class group $\mathrm{MCG}_{g,n}$ of the surface, one obtains the \emph{moduli space}
\[
\mathcal{M}_{g,n}(L_1,\ldots,L_n)=\mathcal{T}_{g,n}(L_1,\ldots,L_n)/\mathrm{MCG}_{g,n}.
\]
The Weil--Petersson measure has an infinite mass on the Teichm\"uller space, but it descends on the moduli space to a finite measure. The Weil--Petersson volume on $\mathcal{M}_{g,n}$ is the analog of the Yang--Mills partition function, and analogs of Wilson loops for simple loops are the so-called \emph{geometric functions}, which are functions $f_\gamma:\mathcal{M}_{g,n}\to\R$ defined by
\[
f_\gamma(X)=\sum_{[\alpha]\in\mathcal{O}(\gamma)}f(\ell_{\alpha}(X)),
\]
where $\mathcal{O}_\gamma$ is the set of homotopy classes in the orbit of $\gamma$ on $X\in\mathcal{M}_{g,n}$ by the action of the mapping class group, $\ell_\alpha(X)$ is the length of the geodesic representative of $[\alpha]$ in $Y\in\mathcal{T}_{g,n}$, and $f:\R_+\to\R$ is a given function. Their expectation with respect to the Weil--Petersson measure is by definition
\[
\E_{WP}[f_\gamma(X)]= \frac{1}{\vol_{WP}(\mathcal{M}_g)}\int_{\mathcal{M}_g}f_\gamma(X)d\vol_{WP}(X).
\]

Mirzakhani described in \cite{Mir} how to compute such expectations for geometric functions, and investigated in \cite{Mir13} the asymptotic behaviour of these integrals in the large $g$ limit. For instance, if $c$ is a separating simple loop on a compact surface $\Sigma$ of genus $g$ without boundary that splits $\Sigma$ into $\Sigma_{1,1}$ and $\Sigma_{g-1,1}$ with one boundary component and respective genera 1 and $g-1$. Let
\[
F_c^L(X)=\vert\{\gamma\in\mathcal{O}_c,\ell_\gamma(X)\leq L\}\vert
\]
be the counting function of closed geodesics in the $\mathrm{MCG}_{g,0}$-orbit of a given closed curve $c$ with length at most $L$. Then Mirzakhani proved in \cite{Mir13}
\[
\E_{WP}[F_c^L(X)] \sim \frac{e^{L/2}L^3}{g},
\]
as $g\to\infty$. Further results have been obtained since then \cite{MZ,MirPet,AM,LW2}, leading to asymptotic expansions of expectations of geometric functions in powers of $\frac1g$. The main motivation in these papers is to describe the spectral gap of random hyperbolic surfaces with respect to the Weil--Petersson measure. In comparison, Theorem \ref{thm:large_g} shows that after the leading term, there is no polynomial term in $\frac1g$ of any order, which seems to be a surprising difference. Note that, however, no contractible simple loop has any use in the Weil--Petersson model: one rather deals with nonseparating simple loops or non-contractible separating simple loops, in order to find a geodesic representative of its homotopy class. 

\subsection*{Acknowledgements}

I would like to thank Thierry L\'evy for many remarks and corrections in a preliminary version of this article, Justine Louis for several comments on the Casimir number estimates, and Thierry Gobron for suggesting the study of large genus asymptotics. I would also like to thank Nalini Anantharaman, Antoine Dahlqvist, Michael Magee, and Myl\`ene Ma\"ida for several discussions. I also thank the anonymous referee for valuable comments. Part of this work has been financially supported by ANR AI chair BACCARAT (ANR-20-CHIA-0002).

\tableofcontents

\section{Preliminary tools}\label{sec:Prelim}

In this section, we study irreducible representations of $\U(N)$ as couplings of integer partitions. The irreducible representations of $\U(N)$ are labelled by nonincreasing $N$-tuples of integers $\lambda=(\lambda_1,\ldots,\lambda_N)\in\Z^N$ with $\lambda_1\geq\cdots\geq\lambda_N$ (we will use the notation $\lambda=(\lambda_1\geq\cdots\geq\lambda_N)$ for nonincreasing $N$-tuples) called \emph{highest weights}, and we denote respectively by $d_\lambda$ and $c_2(\lambda)$ their \emph{dimension} and \emph{Casimir number}, given by
\begin{equation}\label{eq:dim_un2}
d_\lambda=\prod_{1\leq i<j\leq N} \frac{\lambda_i-\lambda_j+j-i}{j-i}
\end{equation}
and
\begin{equation}\label{eq:c2_un2}
c_2(\lambda) = \frac{1}{N}\left(\sum_{i=1}^N \lambda_i^2 + \sum_{1\leq i<j\leq N} (\lambda_i-\lambda_j)\right).
\end{equation}
The set of irreducible representations is denoted by $\widehat{\U}(N)$ and is in bijection with the set of highest weights. The \emph{character} of a representation of highest weight $\lambda$ is given by the \emph{Schur function} $s_\lambda$, which is a symmetric polynomial when $\lambda_N\geq 0$ and a symmetric Laurent polynomial otherwise. We will not need its explicit formula for our computations, but refer to \cite{Mac} and \cite{Sta} for details about it. The character decomposition \eqref{eq:HKdecomp} of the heat kernel on $\U(N)$ becomes then
\begin{equation}\label{eq:FourierUN2}
p_T(U)=\sum_{\lambda\in\widehat{\U}(N)} e^{-c_2(\lambda)\frac{T}{2}}d_\lambda s_\lambda(U),~\forall T> 0,~\forall U\in\U(N).
\end{equation}

In order to compute Wilson loop expectations and variances in a convenient way, we will need two ingredients: a specific measure on $\widehat{\U}(N)$ that was introduced in \cite{DL} and studied in detail in \cite{LM3}, and branching rules on highest weights.

\subsection{Gaussian measure and observables of highest weights}

We will consider the following measure on $\widehat{\U}(N)$:
\[
\mathbb{P}_{\widehat{\U}(N),T}(\{\lambda\}) = \frac{1}{Z_N(1,T)}e^{-\frac{T}{2}c_2(\lambda)}, \quad \forall \lambda\in\widehat{\U}(N).
\]
This measure can be considered as an analog of the discrete Gaussian measure on $\Z$, which is $\mathbb{P}_{\widehat{\U}(1),T}$, as already observed in \cite{DL,LM3}. In this particular case, we also find that $Z_{1}(1,T)=\theta(e^{-\frac{T}{2}})$, where $\theta$ is the following Jacobi theta function:
\[
\theta(q)=\sum_{n\in\Z}q^{n^2}.
\]

From two integer partitions $\alpha=(\alpha_{1}\geq \cdots \geq \alpha_{r}> 0)$ and $\beta=(\beta_{1}\geq \cdots \geq \beta_{s}> 0)$ of respective lengths $\ell(\alpha)=r$ and $\ell(\beta)=s$, and an integer $n\in \Z$, we can form, for all $N\geq r+s+1$, the \emph{composite} highest weight
\begin{equation}\label{eq:afhwbis}
\lambda_{N}(\alpha,\beta,n)=(\alpha_1+n,\ldots,\alpha_r+n,\underbrace{n,\ldots,n}_{N-r-s},n-\beta_s,\ldots,n-\beta_1) \in \widehat\U(N).
\end{equation}
This definition is extended in the obvious way to the cases where one or both of the partitions $\alpha$ and $\beta$ are the empty partition.\\

We proved in \cite{Lem} that this construction can be reversed: given a highest weight $\lambda\in\widehat{\U}(N)$, we can define unambiguously $\alpha,\beta$ and $n$ such that $\lambda=\lambda_N(\alpha,\beta,n)$. It provides a bijection $\lambda_N:\Lambda_N^0\to\widehat{\U}(N),$ where $\Lambda_N^0$ is a subset of $\mathfrak{Y}\times\mathfrak{Y}\times\Z$ given by $(\alpha,\beta,n)$ such that $\ell(\alpha)\leq\lfloor (N+1)/2\rfloor$ and $\ell(\beta)\leq N-\lfloor(N+1)/2\rfloor$. The main weights of interest will be $\lambda_N(\alpha,\beta,n)$ such that $\alpha$ and $\beta$ are small perturbations of the constant weight $(n,\ldots,n)$; such weights were already studied by Gross and Taylor in \cite{GT} for finite $\alpha$ and $\beta$. Our approach actually enables $\alpha$ and $\beta$ to grow with $N$ while remaining small enough, so that $\lambda_N(\alpha,\beta,n)$ remains close to $(n,\ldots,n)$, \emph{i.e.}, ``almost flat".\\

More recently, we also proved, together with Myl\`ene Ma\"ida, that the coupling of partitions can also pass to measures. For any $q\in(0,1)$, the $q$-uniform measure $\mathscr{U}(q)$ on the set $\mathfrak{Y}$ of integer partitions is given by
\[
\mathbb{P}(\{\alpha\}) = \phi(q)q^{\vert\alpha\vert},
\]
where $\phi(q)=\prod_{m=1}^\infty(1-q^m)$ is Euler's function. Below are the main results from \cite{LM3} that we will need in the current paper. The first one expresses expectations with respect to $\mathbb{P}_{\widehat{\U}(N),T}$ in terms of random partitions.

\begin{proposition}[\cite{LM3}]\label{prop:change_variable_partition}
For any $N\geq 1$, $q\in(0,1)$, and any measurable $f:\widehat{\U}(N)\to\R$,
\begin{equation}
\E_{\widehat{\U}(N),T}[f(\lambda)]=\frac{\theta(q_T)}{Z_N(1,T)\phi(q_T)^2}\E\left[f(\lambda_N(\alpha,\beta,n))q_T^{\frac{2}{N}F(\alpha,\beta,n)}\mathbf{1}_{\Lambda_N^0}(\alpha,\beta,n)\right],
\end{equation}
where $\alpha,\beta,n$ are independent random variables such that $\alpha,\beta\sim\mathscr{U}(q_T)$ and $n\sim\mathbb{P}_{\widehat{\U}(1),T}$.
\end{proposition}

The next two results are general estimates on partitions.

\begin{proposition}[\cite{LM3}]\label{prop:dev_ineq}
Let $q\in(0,1)$ be fixed, and $\alpha\sim\mathscr{U}(q)$ be a random partition. Then, for any $M>0,$
\begin{equation}
\mathbb{P}(\ell(\alpha) > M)\leq C_q e^{M\log q}
\end{equation}
\begin{equation}
\mathbb{P}(\vert\alpha\vert > M)\leq C_qe^{\frac{M}{2}\log q}.
\end{equation}
\end{proposition}

\begin{lemma}[\cite{LM3}]\label{lem:domin}
For any $\alpha,\beta,n$ and any $N\geq \ell(\alpha)+\ell(\beta)+1$,
\begin{equation}
c_2(\lambda_N(\alpha,\beta,n))\geq \frac12(\vert\alpha\vert+\vert\beta\vert)+ \left(n+\frac{\vert\alpha\vert-\vert\beta\vert}{N}\right)^2=:C_N(\alpha,\beta,n),
\end{equation}
and for any $q\in(0,1)$,
\begin{equation}
\sup_{N\geq 1}\sum_{\alpha,\beta,n:}q^{C_N(\alpha,\beta,n)}<\infty.
\end{equation}
\end{lemma}

We will also need a convergence estimate for $\mathbb{P}_{\widehat{\U}(1),T}$.

\begin{proposition}\label{prop:fluct_u1}
\begin{equation}\label{eq:fluct_u1}
\E_{\widehat{\U}(1),T}\left[e^{-\frac{tn}{N}}-\theta(q_T)\right] = O(N^{-2}).
\end{equation}
\end{proposition}

\begin{proof}
Using a Taylor expansion of the exponential with integral remainder, we have
\begin{align*}
\left(\sum_{n\in\Z} e^{-\frac{T}{2}n^2-\frac{t}{N}n}\right) - \theta(q_T) = & \sum_{n\in\Z} e^{-\frac{T}{2}n^2}\left(e^{-\frac{t}{N}n}-1\right)\\
= & \sum_{n\in\Z}e^{-\frac{T}{2}n^2}\left(-\frac{t}{N}n+\frac{t^2n^2}{N^2}+I_N(n)\right),
\end{align*}
where $I_N(n)$ is a quantity that is $O(N^{-3})$ uniformly in $n$. Hence,
\[
\left(\sum_{n\in\Z} e^{-\frac{T}{2}n^2-\frac{t}{N}n}\right) - \theta(q_T) = -\frac{t}{N}\sum_{n\in\Z}nq_t^{n^2} + \frac{t^2}{N^2}\sum_{n\in\Z}n^2q_t^{n^2} + O(N^{-3}).
\]
The first sum vanishes by symmetry, and the second one is bounded. Equation \eqref{eq:fluct_u1} follows by dividing both terms by $\theta(q_T)$.
\end{proof}

\subsection{Branching rules}

Let us describe how the branching rules $\mu\nearrow\lambda,\mu\searrow\lambda$ and $\mu\sim\lambda$ described in the introduction are translated into the decomposition $\lambda_N(\alpha,\beta,n)$.

\begin{proposition}\label{prop:branch_afhw}
Let $\alpha,\beta,\alpha',\beta'$ be integer partitions, and $n,n',N$ three integers such that $N\geq \max(\ell(\alpha)+\ell(\beta)+1,\ell(\alpha')+\ell(\beta')+1)$. Then the following assertions are equivalent:
\begin{enumerate}
\item $\lambda_N(\alpha',\beta',n')\searrow\lambda_N(\alpha,\beta,n)$,
\item ($\alpha'\searrow\alpha$, $\beta'=\beta$ and $n'=n$) \ or \ ($\beta'\nearrow\beta$, $\alpha'=\alpha$ and $n'=n$).
\end{enumerate}
\end{proposition}

\begin{proof}[Proof]
In order to see the equivalence, recall the construction of $\lambda_N(\alpha,\beta,n)$ given in \eqref{eq:afhwbis}:
\[
\lambda_{N}(\alpha,\beta,n)=(\alpha_1+n,\ldots,\alpha_r+n,\underbrace{n,\ldots,n}_{N-r-s},n-\beta_s,\ldots,n-\beta_1)=(\lambda_1,\ldots,\lambda_N).
\]
The only way of having $\lambda_N(\alpha',\beta',n')\searrow\lambda_N(\alpha,\beta,n)$ is to increment a coefficient $\lambda_i$ such that $\lambda_i>\lambda_{i+1}$. It clearly excludes the coefficients $\lambda_{r+1},\ldots,\lambda_{N-s}$. Two only ways remain: either we increment one of the coefficients $\lambda_1,\ldots,\lambda_r$, or we increment one of the coefficients $\lambda_{r+s+1},\ldots,\lambda_N$. The first case corresponds to $\alpha'\searrow\alpha$ and the second one to $\beta'\nearrow\beta$ (while leaving the other parameters unchanged), according to the description of the coefficients in terms of $\alpha,\beta$ and $n$. The equivalence follows immediately.
\end{proof}

Using these informations, we can estimate the difference of Casimir numbers from two branching highest weights. Recall that the \emph{content} of the box $(i,j)$ of a diagram corresponding to the partition $\alpha=(\alpha_1\geq\cdots\geq\alpha_r>0)$ is the quantity $c(i,j)=j-i$, and that the \emph{total content} $K(\alpha)$ of the diagram is the sum of contents of all its boxes. We denote also by $|\alpha|$ the sum of the parts of the partition: $|\alpha|=\alpha_1+\cdots+\alpha_r$.

\begin{proposition}[\cite{Lem}]\label{prop04}
Let $\alpha$ and $\beta$ be two partitions of respective lengths $r$ and $s$. Let $n$ be an integer. Then, provided $N\geq r+s+1$, we have
\begin{equation}\label{eq15}
c_2(\lambda_N(\alpha,\beta,n)) = \vert\alpha\vert + \vert\beta\vert + n^2 +  \frac{2}{N}\big(K(\alpha)+K(\beta)+n(\vert\alpha\vert-\vert\beta\vert)\big).
\end{equation}
\end{proposition}

Here is how Casimir numbers behave with respect to branching rules.

\begin{proposition}\label{prop:casimir2}
Let $\alpha,\beta$ be two integer partitions and $N\geq\ell(\alpha)+\ell(\beta)+1$ be an integer. Then, for any $n\in\Z$ and any $\alpha'\searrow\alpha$, if we denote by $\square'$ the box that has been added to $\alpha$, we have
\begin{equation}
c_2(\lambda_N(\alpha',\beta,n))-c_2(\lambda_N(\alpha,\beta,n)) = 1 + \frac{2}{N}(c(\square')+n),
\end{equation}
For any $\beta'\nearrow\beta$, if we denote by $\square'$ the box that has been removed from $\beta$, we have
\begin{equation}
c_2(\lambda_N(\alpha,\beta',n))-c_2(\lambda_N(\alpha,\beta,n)) = -1 + \frac{2}{N}(-c(\square')+n),
\end{equation}
For any $\alpha'\searrow\alpha$ and $\beta'\searrow\beta$, if we denote respectively $\square_1$ and $\square_2$ the box added to $\alpha$ and $\beta,$
\begin{equation}
c_2(\lambda_N(\alpha',\beta',n))-c_2(\lambda_N(\alpha,\beta,n)) = 2+\frac{2}{N}(c(\square_1)+c(\square_2)).
\end{equation}
\end{proposition}

Now, let us turn to dimensions and their ratios. We first need to introduce a particular regime to deal with dimensions. Let us fix a real $\gamma\in (0,\frac13)$, that we can consider as a control parameter\footnote{Intuitively, we want it to be as small as possible, while remaining positive.}. Set
\[
\Lambda_N^\gamma=\{(\alpha,\beta,n)\in\mathfrak{Y}\times\mathfrak{Y}\times\Z: \vert\alpha\vert\leq N^\gamma,\vert\beta\vert\leq N^\gamma\}.
\]
A crucial point is the following: for $N$ large enough, any partition of an integer not greater than $N^{\gamma}$ has less than $\frac{N}{2}$ positive parts. Thus, $\Lambda_N^\gamma\subset\Lambda_N^0$. The set $\{\lambda_N(\alpha,\beta,n):(\alpha,\beta,n)\in\Lambda_N^\gamma\}$ is a subset of $\widehat{\U}(N)$ for $N$ large enough, and correspond to the heuristic notion of `almost flat' highest weights. Sometimes, it will be sufficient to consider highest weights that are truly flat, and we denote by
\[
\Lambda_N=\{\lambda_N(\varnothing,\varnothing,n),n\in\Z\}\simeq\Z
\]
the corresponding set. The dimensions of the corresponding irreducible representations can be estimated in terms of representations of the symmetric group. Recall that the irreducible representations of $S_n$ are in one-to-one correspondence with integer partitions $\alpha\vdash n$. We denote by $d^\alpha$ (resp. $\chi^\alpha$) the dimension (resp. character) of the irreducible representation of $S_n$ corresponding to $\alpha$.

\begin{proposition}\label{prop:GTbis}
Let $\alpha$ and $\beta$ be two integer partitions, $N\geq \ell(\alpha)+\ell(\beta)+1$ an integer and $\gamma\in(0,\tfrac13)$ a real number. Let us assume that $|\alpha|\leq N^\gamma$ and $|\beta|\leq N^\gamma$. The partitions $\alpha$ and $\beta$ induce two highest weights of $\U(N)$, $\tilde{\alpha}=\lambda_N(\alpha,\varnothing,0)$ and $\tilde{\beta}=\lambda_N(\beta,\varnothing,0)$. We have the following facts.
\begin{enumerate}
\item For $N$ large enough,
\begin{equation}
\frac{d^\alpha N^{|\alpha|}}{|\alpha|!} (1-2N^{2\gamma-1}) \leq d_{\tilde{\alpha}} \leq \frac{d^\alpha N^{|\alpha|}}{|\alpha|!} (1+2N^{2\gamma-1}),
\end{equation}
and the same result holds for $\beta$.
\item For any $n\in\Z$, the dimension of $\lambda_N(\alpha,\beta,n)$ satisfies the following estimation, assuming that $N$ is large enough:
\begin{equation}\label{eq:dimfactor}
\frac{d^\alpha d^\beta N^{|\alpha|+|\beta|}}{|\alpha|!|\beta|!}(1-24N^{3\gamma-1})\leq d_{\lambda_N(\alpha,\beta,n)} \leq \frac{d^\alpha d^\beta N^{|\alpha|+|\beta|}}{|\alpha|!|\beta|!}(1+24N^{3\gamma-1})
\end{equation}
\end{enumerate}
\end{proposition}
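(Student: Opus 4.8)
The plan is to base both estimates on the \emph{content formula} for the dimension of a polynomial representation, which reads
\[
d_{\tilde\alpha}=\prod_{\square\in\alpha}\frac{N+c(\square)}{h(\square)}=\frac{d^\alpha}{|\alpha|!}\prod_{\square\in\alpha}\bigl(N+c(\square)\bigr),
\]
where $h(\square)$ is the hook length of the box $\square$ and the second equality uses the hook-length formula $d^\alpha=|\alpha|!/\prod_\square h(\square)$. For assertion (i) I would factor out $N^{|\alpha|}$ and control the product $\prod_\square(1+c(\square)/N)$ through its logarithm. Its first-order term is $\sum_\square c(\square)/N=K(\alpha)/N$, which by the content bound $|K(\alpha)|\leq|\alpha|(|\alpha|-1)/2$ already used in the proof of Lemma \ref{lem:su_oddbis} is at most $\tfrac12 N^{2\gamma-1}$; the second-order remainder, bounded via $|\log(1+x)-x|\leq x^2$ together with $|c(\square)|<|\alpha|\leq N^\gamma$, is at most $\sum_\square c(\square)^2/N^2\leq|\alpha|^3/N^2\leq N^{3\gamma-2}$, hence subdominant. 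Exponentiating and using $e^{\pm x}=1\pm O(x)$ as $x\to 0$ then produces the multiplicative window $1\pm 2N^{2\gamma-1}$, and the same argument applies verbatim to $\beta$.

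For assertion (ii) I would first note that the Weyl dimension is invariant under the global shift $\lambda\mapsto\lambda+n$, so that it suffices to treat $\mu:=\lambda_N(\alpha,\beta,0)$. Writing $d_\mu=\prod_{i<j}\bigl(1+\tfrac{\mu_i-\mu_j}{j-i}\bigr)$ and splitting the indices into the blocks $A=\{1,\dots,r\}$ (where $\mu_i=\alpha_i$), $M=\{r+1,\dots,N-s\}$ (where $\mu_i=0$) and $B=\{N-s+1,\dots,N\}$ (where $\mu_i=-\beta_{N+1-i}$), I would compare the resulting six families of factors with the analogous block decompositions of $d_{\tilde\alpha}$ and of $d_{\hat\beta}$, where $\hat\beta=(0,\dots,0,-\beta_s,\dots,-\beta_1)$ labels the dual of $\tilde\beta$ and therefore satisfies $d_{\hat\beta}=d_{\tilde\beta}$. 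The crucial observation is that every family of pair-factors of $\mu$ coincides with the corresponding factor of $d_{\tilde\alpha}\,d_{\tilde\beta}$ except for the cross pairs with $i\in A$ and $j\in B$; for such a pair, writing $a=\alpha_i$, $b=\beta_{N+1-j}$ and $d=j-i$, the ratio of the $\mu$-factor to the product of the $\tilde\alpha$- and $\hat\beta$-factors is exactly
\[
\frac{d(d+a+b)}{(d+a)(d+b)}=1-\frac{ab}{(d+a)(d+b)}.
\]

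Since $i\leq r$ and $j\geq N-s+1$ force $d\geq N-(r+s)+1\geq N/2$ for $N$ large (recall $r,s\leq N^\gamma$), each cross factor lies in $[1-4ab/N^2,1]$, and the double sum factorises as $\tfrac{4}{N^2}\bigl(\sum_i\alpha_i\bigr)\bigl(\sum_j\beta_{N+1-j}\bigr)=\tfrac{4}{N^2}|\alpha||\beta|\leq 4N^{2\gamma-2}$; hence the cross-term product equals $1-O(N^{2\gamma-2})$. Multiplying this by the estimate of (i) applied to both $\alpha$ and $\beta$ gives $d_\mu=\tfrac{d^\alpha d^\beta N^{|\alpha|+|\beta|}}{|\alpha|!\,|\beta|!}\,(1\pm 2N^{2\gamma-1})^2\,(1-O(N^{2\gamma-2}))$, and since $2\gamma-1\leq 3\gamma-1$ all corrections are comfortably absorbed into the stated window $1\pm 24N^{3\gamma-1}$. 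I expect the main obstacle to be the structural step rather than the estimates: recognising that the block decomposition yields an \emph{exact} cancellation of all contributions save the $A\times B$ cross terms, and that those collapse to the single clean identity above. Once this factorisation is in hand the remaining bounds are elementary, the only other delicate point being the dimension invariance $d_{\hat\beta}=d_{\tilde\beta}$ coming from passing to the contragredient representation.
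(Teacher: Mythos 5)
Your argument is correct and follows the same skeleton as the paper's proof: part (i) rests on the identity $d_{\tilde\alpha}=\frac{d^\alpha N^{|\alpha|}}{|\alpha|!}\prod_{\square\in\alpha}\bigl(1+\frac{c(\square)}{N}\bigr)$ (the paper cites it from Gross--Taylor and Okounkov--Vershik rather than deriving it from the hook-length formula), and part (ii) rests on the factorisation $d_{\lambda_N(\alpha,\beta,n)}=d_{\tilde\alpha}d_{\tilde\beta}\,Q(\alpha,\beta)$, which the paper invokes as a consequence of the Weyl dimension formula and which you rederive by the block decomposition $A\cup M\cup B$ of \eqref{eq:dim_un2}; your cross factor $\frac{d(d+a+b)}{(d+a)(d+b)}$ with $d=N+1-i-j$ is exactly the paper's $Q(\alpha,\beta)$. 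The one genuine difference is in how $Q$ is estimated: the paper bounds each of the four constituents $1+A_N,1+B_N,1+C_N,1+D_N$ separately by $1\pm 4N^{\gamma-1}$ and pays a factor $rs\leq N^{2\gamma}$ in the exponent, arriving at $Q=1\pm 8N^{3\gamma-1}$, whereas you exploit the exact identity $\frac{d(d+a+b)}{(d+a)(d+b)}=1-\frac{ab}{(d+a)(d+b)}$ so that each cross factor lies in $[1-4\alpha_i\beta_j/N^2,1]$ and the errors sum to $4|\alpha||\beta|/N^2\leq 4N^{2\gamma-2}$; this gives the strictly sharper bound $Q=1-O(N^{2\gamma-2})$ and would in fact allow the window in \eqref{eq:dimfactor} to be tightened to $1\pm CN^{2\gamma-1}$. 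Your reduction to $n=0$ via shift-invariance of \eqref{eq:dim_un2} and the identification $d_{\hat\beta}=d_{\tilde\beta}$ via the contragredient are both sound, so the proof goes through as written.
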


Note that this proposition generalizes a result by Gross and Taylor: in \cite{GT}, they derived similar asymptotic expansions but in the case where $|\alpha|$ and $|\beta|$ were finite and not depending on $N$. However, we really need the stronger assumption $|\alpha|,|\beta|\leq N^\gamma$ as we will see later.

\begin{proof}[Proof of Proposition \ref{prop:GTbis}]
$(i)$ let us first recall that (cf. \cite{GT,OV})
\begin{equation}\label{eq:dimgt}
d_{\tilde{\alpha}}=\frac{d^\alpha N^{|\alpha|}}{|\alpha|!}\prod_{\substack{1\leq i\leq r\\1\leq j\leq \alpha_i}}\left(1+\frac{j-i}{N}\right).
\end{equation}
But for any $1\leq i\leq r$ and $1\leq j\leq\alpha_i$, we have $1-r\leq j-i\leq \alpha_i-1$, and under the assumption $|\alpha|\leq N^\gamma$ it implies that $|j-i|\leq N^\gamma$. Thus,
\[(1-N^{\gamma-1})^{|\alpha|}\leq \prod_{\substack{1\leq i\leq r\\1\leq j\leq \alpha_i}}\left(1+\frac{j-i}{N}\right)\leq (1+N^{\gamma-1})^{|\alpha|}.\]
From the convexity inequality of the exponential function, we have
\[
(1+N^{\gamma-1})^{|\alpha|}\leq e^{|\alpha|N^{\gamma-1}}.
\]
We can use the following reverse inequalities, that hold for any $x\in(0,\tfrac12)$:
\[
e^x\leq 1+2x,\ \log(1-x)\geq -2x.
\]
It implies that, for $N$ such that $N^{\gamma-1}<\tfrac12$ (which is true for $N$ large enough),
\[
1-2N^{2\gamma-1}\leq e^{-2|\alpha|N^{\gamma-1}}\leq (1-N^{\gamma-1})^{|\alpha|}\leq \prod_{\substack{1\leq i\leq r\\1\leq j\leq \alpha_i}}\left(1+\frac{j-i}{N}\right) \leq e^{|\alpha|N^{\gamma-1}}\leq 1+2N^{2\gamma-1}.
\]
This estimation, applied to \eqref{eq:dimgt}, gives the expected result.\\

$(ii)$ Let us first remark that, from the Weyl dimension formula, we have for any $n\in\Z$
\begin{equation}\label{eq:fact_dlambda}
d_{\lambda_N(\alpha,\beta,n)} = d_{\tilde{\alpha}} d_{\tilde{\beta}} Q(\alpha,\beta),
\end{equation}
with
\[
Q(\alpha,\beta)=\prod_{\substack{1\leq i\leq r\\1\leq j\leq s}}\frac{(N+1-i-j)(\alpha_i+\beta_j+N+1-i-j)}{(\alpha_i+N+1-i-j)(\beta_j+N+1-i-j)}.
\]
As $r\leq|\alpha|$ and $s\leq|\beta|$, the assumption $|\alpha|,|\beta|\leq N^\gamma$ implies that we also have $r,s\leq N^\gamma$. For any $1\leq i\leq r$ and $1\leq j\leq s$, we have therefore
\[
-2N^\gamma\leq 3-2N^\gamma\leq \alpha_i+\beta_j-i-j+1\leq 2N^\gamma-1\leq 2N^\gamma.
\]
It implies that
\[
\left\vert \frac{1+\alpha_i+\beta_j-i-j}{N}\right\vert \leq 2N^{\gamma-1},
\]
and we have the same bound for $\left\vert\frac{1+\alpha_i-i-j}{N}\right\vert$, $\left\vert\frac{1+\beta_j-i-j}{N}\right\vert$ and $\left\vert\frac{1-i-j}{N}\right\vert$, so that
\[
Q(\alpha,\beta)=\prod_{\substack{1\leq i\leq r\\1\leq j\leq s}} \frac{(1+A_N(i,j))(1+B_N(i,j))}{(1+C_N(i,j))(1+D_N(i,j))},
\]
with $|A_N(i,j)|,|B_N(i,j)|,|C_N(i,j)|,|D_N(i,j)|\leq 2N^{\gamma-1}$.

For any $(i,j)$ we have
\[
\frac{1}{1+C_N(i,j)}=1-\frac{C_N(i,j)}{1+C_N(i,j)}=1+C'_N(i,j),
\]
with $|C'_N(i,j)|\leq 2|C_N(i,j)|$, and the same result holds for $D_N(i,j)$. It implies that
\[
Q(\alpha,\beta)=\prod_{\substack{1\leq i\leq r\\1\leq j\leq s}}(1+A_N(i,j))(1+B_N(i,j))(1+C'_N(i,j))(1+D'_N(i,j)),
\]
with $|A_N(i,j)|,|B_N(i,j)|,|C'_N(i,j)|,|D'_N(i,j)|\leq 4N^{\gamma-1}$. Hence, using the same inequalities as in $(i)$ we get the estimation
\[
e^{-8N^{3\gamma-1}}\leq (1-4N^{\gamma-1})^{N^{2\gamma}}\leq Q(\alpha,\beta) \leq (1+4N^{\gamma-1})^{N^{2\gamma}}\leq e^{4N^{3\gamma-1}},
\]
which implies
\[
1-8N^{3\gamma-1} \leq Q(\alpha,\beta) \leq 1+8N^{3\gamma-1}.
\]
We can apply this, as well as the point $(i)$, to get for $N^{2\gamma-1}<\tfrac14$ (which is in particular true for $N$ large enough)
\[
d_{\lambda_N(\alpha,\beta,n)}\leq \frac{d^\alpha d^\beta N^{|\alpha|+|\beta|}}{|\alpha|!|\beta|!}(1+2N^{2\gamma-1})^2(1+8N^{3\gamma-1}),
\]
which can be simplified considering that for $(x,y,z)\in(0,\tfrac14)^3$ such that $x+y+z<\tfrac14$,
\[
(1+x)(1+y)(1+z)\leq e^{x+y+z}\leq 1+2(x+y+z).
\]
Indeed, for $N$ such that $N^{3\gamma-1}+2N^{2\gamma-1}<\tfrac14$ we obtain
\[
d_{\lambda_N(\alpha,\beta,n)}\leq \frac{d^\alpha d^\beta N^{|\alpha|+|\beta|}}{|\alpha|!|\beta|!}(1+8N^{2\gamma-1}+16N^{3\gamma-1})\leq \frac{d^\alpha d^\beta N^{|\alpha|+|\beta|}}{|\alpha|!|\beta|!}(1+24N^{3\gamma-1}),
\]
and
\[
d_{\lambda_N(\alpha,\beta,n)}\geq \frac{d^\alpha d^\beta N^{|\alpha|+|\beta|}}{|\alpha|!|\beta|!}(1-24N^{3\gamma-1}),
\]
which proves the result.
\end{proof}

This general estimate can be used for a more specific one, which is actually the one we need.

\begin{lemma}\label{lem:ratio_dim}
Let $\alpha,\beta$ be two integer partitions, $N\geq\ell(\alpha)+\ell(\beta)+1$ be an integer and $\gamma\in(0,\frac13)$ be a real number. Let us assume that $\vert\alpha\vert+1\leq N^\gamma$ and $\vert\beta\vert\leq N^\gamma$. Then, for any $\alpha'\searrow\alpha$, any $\beta'\nearrow\beta$, and any $n\in\Z$,
\begin{equation}
\frac{d_{\lambda_N(\alpha',\beta,n)}}{d_{\lambda_N(\alpha,\beta,n)}} = \frac{Nd^{\alpha'}}{\vert\alpha'\vert d^\alpha}(1+O(N^{3\gamma-1})),
\end{equation}
\begin{equation}
\frac{d_{\lambda_N(\alpha,\beta',n)}}{d_{\lambda_N(\alpha,\beta,n)}} = \frac{\vert\beta\vert}{N}\frac{d^{\beta'}}{d^\beta}(1+O(N^{3\gamma-1})),
\end{equation}
where the $O$ is uniform on $N,\alpha,\beta,n$.
\end{lemma}

\begin{proof}
Using Proposition \ref{prop:GTbis}, we have
\[
\frac{d_{\lambda_N(\alpha',\beta,n)}}{d_{\lambda_N(\alpha,\beta,n)}} = \frac{d^{\alpha'}N^{\vert\alpha'\vert}\vert\alpha\vert!}{d^{\alpha}N^{\vert\alpha\vert}\vert\alpha'\vert!}(1+O(N^{3\gamma-1}))=\frac{Nd^{\alpha'}}{d^\alpha\vert\alpha'\vert}(1+O(N^{3\gamma-1})),
\]
where the $O$ is indeed independent from $N$ and also from $\alpha,\beta$.
The second equality follows from a similar computation.
\end{proof}

Finally, let us mention a branching formula for dimensions of irreducible representations of the symmetric group. It will be mainly useful for the large $N$ asymptotics.

\begin{proposition}\label{prop:branch}
Let $\lambda\vdash n$ for any positive integer $n$. We have
\begin{equation}\label{eq:branchplus}
\sum_{\substack{\mu\vdash (n+1)\\\mu\searrow\lambda}} \frac{d^\mu}{(n+1)d^\lambda}=1,
\end{equation}
and
\begin{equation}\label{eq:branchminus}
\sum_{\substack{\mu\vdash (n-1)\\\mu\nearrow\lambda}} \frac{d^\mu}{d^\lambda}=1.
\end{equation}
\end{proposition}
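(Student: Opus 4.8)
The plan is to recognise both identities as the dimension shadows of the two classical branching rules for the symmetric groups, and to argue at the level of representations (equivalently, of standard Young tableaux) rather than to manipulate the hook-length products directly. Throughout, recall that $d^\lambda=\dim S^\lambda$, where $S^\lambda$ is the Specht module of $\mathfrak{S}_{|\lambda|}$ attached to $\lambda$, and that $d^\lambda$ equally counts the standard Young tableaux of shape $\lambda$ (cf. \cite{FH,Sta}).

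First I would treat \eqref{eq:branchminus} via the restriction branching rule. For $\lambda\vdash n$ one has
\[
\mathrm{Res}^{\mathfrak{S}_{n}}_{\mathfrak{S}_{n-1}} S^\lambda \cong \bigoplus_{\mu\nearrow\lambda} S^\mu,
\]
where the sum runs over the partitions $\mu\vdash n-1$ obtained from $\lambda$ by removing a single box; in the paper's convention these are exactly the $\mu$ with $\mu\nearrow\lambda$. Taking dimensions yields $d^\lambda=\sum_{\mu\nearrow\lambda} d^\mu$, which is \eqref{eq:branchminus} after dividing by $d^\lambda$. At the purely combinatorial level this is the bijection sending a standard tableau of shape $\lambda$ to the tableau obtained by deleting the (necessarily removable corner) box containing the largest entry $n$, the shape of the result being some $\mu\nearrow\lambda$.

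For \eqref{eq:branchplus} I would use the dual statement, which follows from the restriction rule by Frobenius reciprocity: the multiplicity of $S^\mu$ in $\mathrm{Ind}^{\mathfrak{S}_{n+1}}_{\mathfrak{S}_{n}} S^\lambda$ equals the multiplicity of $S^\lambda$ in $\mathrm{Res}^{\mathfrak{S}_{n+1}}_{\mathfrak{S}_{n}} S^\mu$, and by the previous paragraph this is $1$ precisely when $\lambda$ is obtained from $\mu$ by removing a box, i.e. when $\mu\searrow\lambda$. Hence
\[
\mathrm{Ind}^{\mathfrak{S}_{n+1}}_{\mathfrak{S}_{n}} S^\lambda \cong \bigoplus_{\mu\searrow\lambda} S^\mu,
\]
the sum now running over the $\mu\vdash n+1$ obtained by adding a box to $\lambda$. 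Taking dimensions and using that the induced module has dimension $[\mathfrak{S}_{n+1}:\mathfrak{S}_{n}]\,d^\lambda=(n+1)\,d^\lambda$ gives $\sum_{\mu\searrow\lambda} d^\mu=(n+1)\,d^\lambda$, which is \eqref{eq:branchplus} after dividing by $(n+1)d^\lambda$.

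Since both branching rules are entirely standard, there is no genuine difficulty here; the proof is essentially bookkeeping. The only points that require care are matching the paper's arrows $\nearrow$ and $\searrow$ with the \emph{remove-a-box} and \emph{add-a-box} operations on Young diagrams, and not forgetting the index factor $n+1=[\mathfrak{S}_{n+1}:\mathfrak{S}_{n}]$, which is exactly what produces the $(n+1)$ in the denominator of \eqref{eq:branchplus} and distinguishes the two normalisations.
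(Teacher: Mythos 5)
Your proposal is correct and follows essentially the same route as the paper: both identities are obtained by taking dimensions in the classical restriction/induction branching rules for $\mathfrak{S}_n$, with the factor $n+1$ coming from the index $[\mathfrak{S}_{n+1}:\mathfrak{S}_n]$ (the paper gets it by evaluating the induced-character formula at the identity, which is the same computation). The only cosmetic difference is that you derive the induction rule from the restriction rule via Frobenius reciprocity, whereas the paper simply quotes both rules.
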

\begin{proof}[Proof]
Let us recall the so-called branching rules on $S_n$, cf. \cite{Sag} for example:
\[
\chi^\lambda\uparrow^{S_{n+1}} = \sum_{\substack{\mu\vdash (n+1)\\\mu\searrow\lambda}} \chi^\mu,
\]
and
\[
\chi^\lambda\downarrow_{S_{n-1}} = \sum_{\substack{\mu\vdash (n-1)\\\mu\nearrow\lambda}} \chi^\mu.
\]
As the character of a restricted representation is equal to the restriction of the character, the second branching rule directly implies \eqref{eq:branchminus}. For the character of an induced representation we have the following result \cite[Eq.(3.18)]{FH}: if $G$ is a finite group and $H$ a subgroup of $G$, then for any character $\chi$ of a representation of $H$ we have
\[
\chi\uparrow^G(g)=\frac{1}{|H|}\sum_{\substack{x\in G\\xgx^{-1}\in H}} \chi(xgx^{-1}),\ \forall g\in G.
\]
If we apply this formula to $G=S_{n+1}$, $H=S_n$, $\chi=\chi^\lambda$ and $g=1$ we get
\eqref{eq:branchplus} as expected.
\end{proof}

\section{Proof of Theorem \ref{prop:wilson_loops_exp_var}}\label{sec:Wilson}

In this section we prove Theorem \ref{prop:wilson_loops_exp_var}. We first need a formula for the integration of characters over commutators of independent uniform random variables.

\begin{lemma}\label{lem:intcommu}
Let $G$ be a compact group and $dg$ its normalized Haar measure. If $(\rho,V)$ is an irreducible representation of $G$, we have
\begin{equation}\label{eq02}
\int_{G^2} \chi_\rho(x[y,z])dydz = \frac{\chi_\rho(x)}{d_\rho^2},\ \forall x\in G.
\end{equation}
\end{lemma}

In order to prove Lemma \ref{lem:intcommu}, we need two intermediary propositions, which are quite standard and whose proof is left to the reader.

\begin{proposition}\label{prop:intcommu1}
Let $G$ be a compact group, and $dg$ its normalized Haar measure. For any irreducible representation $(\rho,V)$ of $G$, we have
\begin{align*}
\int_G \chi_\rho(xgyg^{-1})dg=\frac{1}{d_\rho}\chi_\rho(x)\chi_\rho(y),\ \forall (x,y)\in G^2.
\end{align*}
\end{proposition}

\begin{proposition}\label{prop:intcommu2}
Let $G$ be a compact group and $(\rho,V)$, $(\pi,W)$ two irreducible representations of $G$. Then
\begin{align*}
\chi_\rho * \chi_\pi = \left\lbrace \begin{array}{cc}
\frac{\chi_\rho}{d_\rho} & \text{if } \rho\sim\pi,\\
0 & \text{otherwise.}
\end{array}\right.
\end{align*}
\end{proposition}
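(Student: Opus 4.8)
The plan is to prove this standard convolution identity directly from the Schur orthogonality relations for matrix coefficients, which are the fundamental building blocks of the representation theory of compact groups. Fixing orthonormal bases adapted to the unitary structure, I would write $\rho_{ij}$ and $\pi_{kl}$ for the matrix coefficients of $\rho$ and $\pi$, so that $\chi_\rho=\sum_i \rho_{ii}$ and $\chi_\pi=\sum_k \pi_{kk}$. Recall that the Schur orthogonality relations state that $\int_G \pi_{kl}(y)\overline{\rho_{ij}(y)}\,\mathrm{d}y=\frac{1}{d_\rho}\delta_{\rho\pi}\delta_{ki}\delta_{lj}$, where $\delta_{\rho\pi}$ equals $1$ when $\rho\sim\pi$ and $0$ otherwise.

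First I would unfold the convolution at a point $x\in G$ as $(\chi_\rho * \chi_\pi)(x)=\int_G \chi_\rho(xy^{-1})\chi_\pi(y)\,\mathrm{d}y$. The key algebraic step is to expand $\chi_\rho(xy^{-1})=\tr(\rho(x)\rho(y)^{-1})$; using that $\rho$ is unitary, so that $\rho(y)^{-1}=\rho(y)^*$, this becomes $\chi_\rho(xy^{-1})=\sum_{i,j}\rho_{ij}(x)\overline{\rho_{ij}(y)}$. Substituting this together with $\chi_\pi(y)=\sum_k \pi_{kk}(y)$ and interchanging the finite sums with the integral yields
\[
(\chi_\rho * \chi_\pi)(x)=\sum_{i,j,k}\rho_{ij}(x)\int_G \overline{\rho_{ij}(y)}\,\pi_{kk}(y)\,\mathrm{d}y.
\]
Applying Schur orthogonality collapses the integral to $\frac{1}{d_\rho}\delta_{\rho\pi}\delta_{ik}\delta_{jk}$, which forces $i=j=k$ and $\rho\sim\pi$, leaving $(\chi_\rho * \chi_\pi)(x)=\frac{1}{d_\rho}\delta_{\rho\pi}\sum_i \rho_{ii}(x)=\frac{\delta_{\rho\pi}}{d_\rho}\chi_\rho(x)$, which is exactly the claimed formula.

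The main obstacle is not genuine difficulty but rather keeping the conventions consistent: one must ensure the chosen convolution formula matches the one used elsewhere in the paper (so that the heat semigroup satisfies $p_s * p_t=p_{s+t}$), and that the placement of complex conjugates in the orthogonality relations is compatible with the unitarity substitution $\rho(y)^{-1}=\rho(y)^*$. An alternative, more structural route would be to invoke Schur's lemma directly: since $\chi_\pi$ is a central function, the operator $\int_G \chi_\pi(y)\rho(y)\,\mathrm{d}y$ commutes with every $\rho(g)$ and is therefore a scalar multiple of the identity, whose value is recovered by taking a trace; this reproduces the same result and makes transparent why the $\chi_\rho$ are the common eigenfunctions of convolution. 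I would present the matrix-coefficient computation as the primary proof, since it is entirely self-contained given the orthogonality relations, and merely mention the structural route as a remark.
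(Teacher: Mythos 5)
Your proof is correct, but note that the paper itself gives no proof of this proposition: it explicitly states that Proposition~\ref{prop:intcommu2} (along with Proposition~\ref{prop:intcommu1}) ``will not be proved because they are quite standard.'' Your matrix-coefficient argument is the canonical way to fill this gap, and the computation is sound: the expansion $\chi_\rho(xy^{-1})=\sum_{i,j}\rho_{ij}(x)\overline{\rho_{ij}(y)}$ uses unitarity correctly, your convolution convention $(\chi_\rho*\chi_\pi)(x)=\int_G\chi_\rho(xy^{-1})\chi_\pi(y)\,\mathrm{d}y$ agrees (by unimodularity and invariance of Haar measure) with the one arising in the paper's proof of Lemma~\ref{lem:intcommu}, and Schur orthogonality collapses the triple sum to $\frac{\delta_{\rho\pi}}{d_\rho}\chi_\rho(x)$ exactly as claimed. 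Two small points deserve attention. First, the orthogonality relation in the form $\int_G \pi_{kl}(y)\overline{\rho_{ij}(y)}\,\mathrm{d}y=\frac{1}{d_\rho}\delta_{\rho\pi}\delta_{ki}\delta_{lj}$ presupposes that when $\rho\sim\pi$ you realize both in the \emph{same} matrix form; otherwise the right-hand side involves the matrix entries of an intertwiner. This is harmless here since characters depend only on the equivalence class, but it is worth saying explicitly. Second, your structural remark has a conjugation slip: the operator that actually arises from the convolution is $\int_G \chi_\pi(y)\rho(y)^{-1}\,\mathrm{d}y=\int_G \overline{\chi_\pi(y)}\rho(y)\,\mathrm{d}y$, whose scalar (by Schur's lemma and character orthogonality $\int_G\chi_\rho\overline{\chi_\pi}=\delta_{\rho\pi}$) is $\delta_{\rho\pi}/d_\rho$; the operator $\int_G \chi_\pi(y)\rho(y)\,\mathrm{d}y$ as you wrote it is a scalar equal to $\frac{1}{d_\rho}\int_G\chi_\pi\chi_\rho$, which detects equivalence of $\pi$ with the \emph{conjugate} representation $\bar\rho$, not with $\rho$. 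Since you present the matrix-coefficient computation as the primary proof and that computation is complete and correct, this does not affect the validity of your proposal.
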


\begin{proof}[Proof of Lemma \ref{lem:intcommu}]
First, according to Proposition \ref{prop:intcommu1}, we have for any $(x,y)\in G^2$~:
\[ \int_G \chi_\rho(xyzy^{-1}z^{-1})dz = \frac{1}{d_\rho}\chi_\rho(xy)\chi_\rho(y^{-1}).\]
If we integrate out $y\in G$ it appears that
\[\int_{G^2} \chi_\rho(x[y,z])dydz = \frac{(\chi_\rho * \chi_\rho)(x)}{d_\rho},\]
which yields \eqref{eq02} using Proposition \ref{prop:intcommu2}.
\end{proof}

Finally, let us recall a well-known formula, which can be found for instance in \cite{Sta}.

\begin{lemma}[Pieri's rule]\label{lem:pieri}
Let $\lambda\in\widehat{\U}(N)$ be a highest weight, $r$ a positive integer. We have
\begin{equation}\label{eq:pieri2}
\Tr(x)s_\lambda(x) = \sum_{\substack{\mu\in\widehat{\U}(N)\\ \mu\searrow \lambda}} s_\mu(x),\ \forall x\in\U(N).
\end{equation}
\end{lemma}

We now have all the tools to prove Theorem \ref{prop:wilson_loops_exp_var}.

\begin{proof}[Proof of Theorem \ref{prop:wilson_loops_exp_var}]
Let us start from Eq. \eqref{eq:loop1}. We can decompose both heat kernels following \eqref{eq:FourierUN2}:
\begin{align*}
\E[\tr(H_\ell)]=&\frac{1}{Z_N(g,T)}\sum_{\lambda,\mu\in\widehat{\U}(N)}d_\lambda d_\mu e^{-\frac{c_2(\lambda)t}{2}-\frac{c_2(\mu)(T-t)}{2}}\\
&\int_{\U(N)^{2g+1}}\tr(x)s_\lambda(x^{-1})s_\mu(x[y_1,z_1]\cdots[y_g,z_g])dx\prod_{i=1}^gdy_idz_i.
\end{align*}
We can then apply Lemma \ref{lem:intcommu} $g$ times, which turns commutators into inverses of dimensions:
\begin{align*}
\E[\tr(H_\ell)]=&\frac{1}{Z_N(g,T)}\sum_{\lambda,\mu\in\widehat{\U}(N)}d_\lambda (d_\mu)^{1-2g} e^{-\frac{c_2(\lambda)t}{2}-\frac{c_2(\mu)(T-t)}{2}}\int_{\U(N)}\tr(x)s_\lambda(x^{-1})s_\mu(x)dx.
\end{align*}
Then, using Pieri's rule and the fact that $\tr=\frac{1}{N}\Tr$ gives
\begin{align*}
\E[\tr(H_\ell)]=&\frac{1}{Z_N(g,T)}\sum_{\lambda,\mu\in\widehat{\U}(N)}\frac{d_\lambda (d_\mu)^{1-2g}}{N} e^{-\frac{c_2(\lambda)t}{2}-\frac{c_2(\mu)(T-t)}{2}}\sum_{\substack{\nu\in\widehat{\U}(N)\\ \nu\searrow \mu}}\int_{\U(N)}s_\lambda(x^{-1})s_\nu(x)dx.
\end{align*}
From the orthogonality relations of characters, we deduce
\[
\mathbb{E}[\tr(H_\ell)]=\frac{1}{NZ_N(g,T)}\sum_{\substack{\lambda,\mu\in\widehat{\U}(N)\\\mu\nearrow\lambda}}\frac{e^{-\frac{T}{2}c_2(\mu)}}{(d_\mu)^{2g-2}}\frac{d_\lambda}{d_\mu} e^{\frac{t}{2}(c_2(\mu)-c_2(\lambda))}.
\]
We recover \eqref{eq:wilson_loop_exp_sun} by exchanging the roles of $\lambda$ and $\mu$ and rewriting the equation as an expectation over $\lambda$ with respect to $\mathbb{P}_{\widehat{\U}(N),T}$.\\

In the same manner as in Eq. \eqref{eq:loop1}, we can compute $\mathbb{E}[\vert\tr(H_\ell)\vert^2]=\mathbb{E}[\tr(H_\ell)\tr(H_\ell^*)]$ as
\begin{align*}
\mathbb{E}[\vert\tr(H_\ell)\vert^2] =\frac{1}{Z_N(g,T)}\int_{\U(N)^{2g+1}} \tr(x)\tr(x^{-1})p_t(x^{-1})p_{T-t}(x[y_1,z_1]\cdots[y_g,z_g])dx\prod_{i=1}^gdy_idz_i,
\end{align*}

which can be rewritten, using the heat kernel decomposition, Lemma \ref{lem:intcommu} and Lemma \ref{lem:pieri}:
\begin{align*}
\mathbb{E}[\vert\tr(H_\ell)\vert^2] = & \frac{1}{N^2Z_N(g,T)}\sum_{\lambda,\mu\in\widehat{\U}(N)}d_\lambda (d_\mu)^{1-2g}e^{-\frac{c_2(\lambda)t}{2}-\frac{c_2(\mu)(T-t)}{2}}\\
&\times\sum_{\substack{\nu,\tau\in\widehat{\U}(N)\\ \nu\searrow \mu, \tau\searrow\lambda}}\int_{\U(N)}s_\tau(x^{-1})s_\nu(x)dx.
\end{align*}

Using the orthogonality of Schur functions as before, we obtain Eq. \eqref{eq:wilson_loop_var_sun} as expected.
\end{proof}

\section{Large $N$ asymptotics}\label{sec:LargeN}

In this section, we prove Theorems \ref{thm:exp} and \ref{thm:var}. We need to control the convergence of partition functions first. Let us recall the following.

\begin{theorem}[\cite{Lem}]\label{thm:main} Let $\Sigma_{g,T}$ be an orientable surface of genus $g\geq 1$ and area $T\geq 0$. Then:
\begin{enumerate}
\item If $g\geq 2$ and $T>0$:
\begin{equation}
\lim_{N\to \infty} Z_{N}(g,T)=\theta(q_T).
\end{equation}
\item If $g=1$ and $T>0$:
\begin{equation}
\lim_{N\to \infty} Z_{N}(1,T)=\frac{\theta(q_T)}{\phi(q_T)^2}.
\end{equation}
\end{enumerate} 
\end{theorem}

We shall need a stronger statement in order to quantify the fluctuations.

\begin{lemma}\label{lem:fluct_pf}
Let $T>0$ be a fixed real number. We have uniformly, for $N$ large enough,
\begin{equation}
Z_N(1,T) - \frac{\theta(q_T)}{\phi(q_T)^2} = O(N^{-2}).
\end{equation}
For any $g\geq 2$, we have uniformly
\begin{equation}
Z_N(g,T) - \theta(q_T) = O(N^{2-2g}).
\end{equation}
\end{lemma}

\begin{proof}
The estimate for $g=1$ is a direct consequence of \cite[Theorem 1.2]{LM3}. Let us prove the case for $g\geq 2$. From the expression of the partition function we see that only flat highest weights contribute to the limit:
\[
\sum_{\lambda\in\Lambda_N}q_T^{c_2(\lambda)}d_\lambda^{2-2g} = \sum_{n\in\Z}q_T^{n^2}=\theta(q_T).
\]
We get then
\[
Z_N(g,T) - \theta(q_T) = \sum_{\lambda\in\widehat{\U}(N)\setminus\Lambda_N} q_T^{c_2(\lambda)}d_\lambda^{2-2g}.
\]
Recall that $d_\lambda\geq N$ for any $\lambda\not\in\Lambda_N$, therefore
\[
0\leq Z_N(g,T)-\theta(q_T) \leq N^{2-2g}\sum_{\lambda\in\widehat{\U}(N)\setminus\Lambda_N} q_T^{c_2(\lambda)}\leq N^{2-2g}\theta(q_T).
\]
The result follows directly.
\end{proof}

The previous lemma yields an immediate consequence, that will be useful for the fluctuations of Wilson loops.

\begin{corollary}\label{cor:fluct_pf}
For any $T>0$ and any $g\geq 2$, we have uniformly, for $N$ large enough,
\begin{equation}
\frac{Z_N(g,T)}{Z_N(1,T)} = \phi(q_T)^2 + O(N^{-2}).
\end{equation}
\end{corollary}

The proofs of Theorems \ref{thm:exp} and \ref{thm:var} will be split into two parts: we will first deal with $g\geq 2$ using flat highest weights, then we will prove them for $g=1$ with almost flat highest weights. In fact, all cases could be simultaneously handled with almost flat highest weights, but the proofs we give for $g\geq 2$ yield better estimates of the fluctuations, and show that the case $g=1$ is quite special.

\subsection{Proofs for $g\geq 2$}

\begin{proof}[Proof of Theorem \ref{thm:exp} for $g\geq 2$]
Let us start from Equation \eqref{eq:wilson_loop_exp_sun}. The sum over $\mu\in\widehat{\U}(N)$ can be split into flat and non-flat highest weights. We will show that the sum over $\Lambda_N$ gives the limit, and that the sum over $\widehat{\U}(N)\setminus\Lambda_N$ is $O(N^{-2})$. For any $\lambda=(n,\ldots,n)\in\Lambda_N$, the only $\mu\in\widehat{\U}(N)$ such that $\mu\searrow\lambda$ is $\mu=(n+1,n,\ldots,n)$, which has dimension $N$ and Casimir number
\[
c_2((n+1,n,\ldots,n))=n^2+1+\frac{2n}{N}.
\]
Furthermore, it is straightforward that $(n,\ldots,n)$ has dimension $1$ and Casimir number $n^2$.
It yields
\begin{align*}
\mathbb{E}[\tr(H_\ell)] = & \frac{1}{Z_N(g,T)}\sum_{n\in\Z}e^{-\frac{T}{2}n^2-\frac{t}{2}\left(1+\frac{2n}{N}\right)}\\
&+ \frac{1}{NZ_N(g,T)}\sum_{\substack{\lambda\in\widehat{\U}(N)\setminus \Lambda_N}} \frac{q^{c_2(\lambda)}}{d_\lambda^{2g-2}}\sum_{\substack{\mu\in\widehat{\U}(N)\\\mu\searrow\lambda}}\frac{d_\mu}{d_\lambda} e^{\frac{t}{2}(c_2(\mu)-c_2(\lambda))}.
\end{align*}
We rewrite it as $A_N+B_N$, where
\begin{equation}\label{eq:A_N}
A_N = \frac{1}{Z_N(g,T)}e^{-\frac{t}{2}} \sum_{n\in\Z} e^{-\frac{T}{2}n^2-\frac{t}{N}n}= \frac{1}{Z_N(g,T)}e^{-\frac{t}{2}+\frac{t^2}{2TN^2}} \sum_{n\in\Z} e^{-\frac{T}{2}\left(n+\frac{t}{TN}\right)^2},
\end{equation}
and
\[
B_N =\frac{1}{NZ_N(g,T)}\sum_{\substack{\lambda\in\widehat{\U}(N)\setminus \Lambda_N}} \frac{q^{c_2(\lambda)}}{d_\lambda^{2g-2}}\sum_{\substack{\mu\in\widehat{\U}(N)\\\mu\searrow\lambda}}\frac{d_\mu}{d_\lambda} e^{\frac{t}{2}(c_2(\mu)-c_2(\lambda))}.
\]
Using Lemma \ref{lem:fluct_pf}, we have
\[
A_N - q_t = \frac{1+O(N^{-2})}{\theta(q_T)}\sum_{n\in\Z}q_T^{n^2}q_t^{1+\frac{2n}{N}} - q_t.
\]
Putting $q_t/\theta(q_T)$ in factor and leaving the $O(N^{-2})$ term gives
\[
A_N - q_t = \frac{q_t}{\theta(q_T)}\left(\sum_{n\in\Z}q_T^{n^2}q_t^{\frac{2n}{N}} - \theta(q_T)\right) + O(N^{-2}).
\]
By Proposition \ref{prop:fluct_u1}, we finally obtain
\begin{equation}\label{eq:fluct1}
A_N - q_t = O(N^{-2}).
\end{equation}

Let us now deal with $B_N$: from Lemma 2.5 in \cite{Lem} and the fact that adding $1$ to all parts of a highest weight does not change the dimension, we get that
\[
d_\lambda\geq N,\ \forall \lambda\in\widehat{\U}(N)\setminus\Lambda_N.
\]
Furthermore, it is clear that $c_2(\lambda)\geq 0$ for any $\lambda$, from the definition of Casimir element. Thus,
\begin{align*}
0\leq B_N \leq & \frac{1}{NZ_N(g,T)}\sum_{\lambda\in\widehat{\U}(N)\setminus\Lambda_N} \frac{1}{N^{2g-2}}\sum_{\substack{\mu\in\widehat{\U}(N)\\\mu\searrow\lambda}} \frac{d_\mu}{d_\lambda} e^{-\frac{T}{2}c_2(\lambda)+\frac{t}{2}(c_2(\lambda)-c_2(\mu))}.
\end{align*}

Proposition \ref{prop:casimir2} implies that, for $N$ large enough,
\[
e^{-\frac{T}{2}c_2(\lambda)+\frac{t}{2}(c_2(\lambda)-c_2(\mu))}\leq e^{-\frac{T}{2}c_2(\lambda)+\frac{t}{N}(\vert\alpha\vert+\vert\beta\vert+2n)},
\]
where $\alpha,\beta,n$ are such that $\lambda=\lambda_N(\alpha,\beta,n)$. From \eqref{eq:pieri2} we have for any $\lambda\in\widehat{\U}(N)$
\[
\sum_{\mu\searrow\lambda} \frac{d_\mu}{d_\lambda}= N.
\]
These equations yield
\[
0\leq B_N\leq \frac{e^\frac{t}{2}}{Z_N(g,T)}\frac{1}{N^{2g-2}}\sum_{\lambda\in\widehat{\U}(N)\setminus\Lambda_N}e^{-\frac{T}{2}c_2(\lambda)+\frac{t}{N}(\vert\alpha\vert+\vert\beta\vert+2n)}.
\]
According to Lemma \ref{lem:domin}, the sum on the right is bounded by
\[
\sum_{\substack{\alpha,\beta\in\mathfrak{Y}:\\\ell(\alpha)\leq N/2\\\ell(\beta)\leq N/2}}\sum_{n\in\Z}e^{-\frac{T}{2}C_N(\alpha,\beta,n)+\frac{t}{N}(\vert\alpha\vert+\vert\beta\vert+2n)} = \sum_{\substack{\alpha,\beta\in\mathfrak{Y}:\\\ell(\alpha)\leq N/2\\\ell(\beta)\leq N/2}} e^{-(\vert\alpha\vert+\vert\beta\vert)\left(\frac{T}{2}-\frac{t}{N}\right)}\sum_{n\in\Z}e^{-\frac{T}{2}\left(n+\frac{\vert\alpha\vert-\vert\beta\vert}{N}\right)^2+\frac{2tn}{N}}.
\]
The sum over $n$ is bounded independently from $\alpha,\beta$ by
\[
C_1=\sup_{x\in[0,1]}\sum_{n\in\Z}e^{-\frac{T}{2}(x+n)^2+\frac{2tn}{N}}\leq C_2=\sup_{x\in[0,1]}\sum_{n\in\Z}q_T^{(x+n)^2}.
\]
The last bound is independent from $N$, and if we notice that $Z_N(g,T)\geq 1$ for all $N\geq 1$, we finally get
\[
0\leq B_N\leq N^{2-2g}e^{\frac{t}{2}}\left(\sum_{\alpha}e^{-\vert\alpha\vert\left(\frac{T}{2}-\frac{t}{N}\right)}\right)C_2.
\]
For $N$ large enough, we have $\frac{T}{2}-\frac{t}{N}\geq 1/2$ and the sum on $\alpha$ becomes uniformly bounded for larger values of $N$, and thus $B_N=O(N^{2-2g})=O(N^{-2})$. If we combine this estimate with \eqref{eq:fluct1}, we obtain both \eqref{eq:limexp} and \eqref{eq:fluct_exp}.
\end{proof}

\begin{proof}[Proof of Theorem \ref{thm:var} for $g\geq 2$]
Starting from Equation \eqref{eq:wilson_loop_var_sun}, we have
\[
\mathbb{E}[\vert\tr(H_\ell)\vert^2] = A_N+B_N,
\]
where
\[
A_N = \frac{Z_N(1,T)}{Z_N(g,T)}\E_{\widehat{\U}(N),T}\left[\mathbf{1}_{\Lambda_N}(\lambda)\sum_{\substack{\mu\in\widehat{\U}(N): \mu\sim\lambda}}\frac{d_\mu}{N^2d_\lambda^{2g-1}}q_t^{c_2(\mu)-c_2(\lambda)} \right]
\]
and
\[
B_N = \frac{Z_N(1,T)}{Z_N(g,T)}\E_{\widehat{\U}(N),T}\left[\mathbf{1}_{\widehat{\U}(N)\setminus\Lambda_N}(\lambda)\sum_{\substack{\mu\in\widehat{\U}(N): \mu\sim\lambda}}\frac{d_\mu}{N^2d_\lambda^{2g-1}}q_t^{c_2(\mu)-c_2(\lambda)} \right].
\]
If $\lambda=(n,\ldots,n)\in\Lambda_N$, then there are only two highest weights equivalent to $\lambda$, which are $\lambda'=(n+1,n,\ldots,n,n-1)$ and $\lambda$ itself (as soon as $N\geq 2$). After some quick computations, we get $c_2(\lambda')=n^2+2$ and $d_{\lambda'}=N^2-1$, therefore
\[
\E_{\widehat{\U}(N),T}\left[\mathbf{1}_{\Lambda_N}(\lambda)\sum_{\substack{\mu\in\widehat{\U}(N): \mu\sim\lambda}}\frac{d_\mu}{N^2d_\lambda^{2g-1}}q_t^{c_2(\mu)-c_2(\lambda)} \right]=\frac{1}{Z_N(1,T)}\sum_{n\in\Z}q_T^{n^2}\left(\frac{1}{N^2}+e^{-t}\frac{N^2-1}{N^2}\right).
\]
If we combine this equality with Lemma \ref{lem:fluct_pf}, we finally obtain
\[
A_N = \frac{\theta(q_T)}{Z_N(g,T)}\left(\frac{1}{N^2}+e^{-t}\frac{N^2-1}{N^2}\right)=\left(1+O(N^{-2})\right)\left(N^{-2}+e^{-t}(1+N^{-2})\right)=e^{-t}\left(1+O(N^{-2})\right).
\]

Using the same arguments as in the proof of Theorem \ref{thm:exp}, we can prove that $B_N=O(N^{-2})$. Combining the estimates of $A_N$ and $B_N$ leads to
\[
\lim_{N\to\infty}\left(\E[\vert\tr(H_\ell)\vert^2]-\vert\E[\tr(H_\ell)]\vert^2\right) = O(N^{-2}),
\]
which concludes the proof.
\end{proof}

\subsection{Proofs for $g=1$}

Before diving into the proof of Theorems \ref{thm:exp} and \ref{thm:var}, we must derive a few intermediate results, in order to express the moments of Wilson loops in terms of almost flat highest weights. We will first prove in Lemma \ref{lem:bound_lambdagamma} that the probability that $\lambda_N(\alpha,\beta,n)$ is not almost flat is bounded by an arbitrary power of $\frac1N$ for $N$ large enough, then we will express in Lemma \ref{cor:Wilson_loop_exp_partition} the Wilson loop expectation in terms of $(\alpha,\beta,n)\in\Lambda_N^\gamma$ for $\gamma>0$ small enough. Restricting ourselves to $\Lambda_N^\gamma$ will be mandatory to derive the right bounds for the differences of Casimir numbers and the ratios of dimensions.

\begin{lemma}\label{lem:bound_lambdagamma}
For any $T>0$, $\gamma\in(0,\frac13)$ and any $k\geq 1$, if $(\alpha,\beta,n)$ are independent random variables such that $(\alpha,\beta)\sim\mathscr{U}(q_T)$ and $n\sim\mathbb{P}_{\widehat{\U}(1),T}$, then as $N\to\infty$,
\begin{equation}
\mathbb{P}((\alpha,\beta,n)\not\in\Lambda_N^\gamma) = O(N^{-k}).
\end{equation}
\end{lemma}

\begin{proof}
We note that
\[
\mathfrak{Y}\times\mathfrak{Y}\times\Z\setminus\Lambda_N^\gamma \subset \{(\alpha,\beta,n): \vert\alpha\vert > N^\gamma\}\cup\{(\alpha,\beta,n):\vert\beta\vert>N^\gamma\}.
\]
Taking probabilities, we find
\[
\mathbb{P}((\alpha,\beta,n)\not\in\Lambda_N^\gamma)\leq \mathbb{P}(\vert\alpha\vert>N^\gamma)+\mathbb{P}(\vert\beta\vert>N^\gamma) = 2\mathbb{P}(\vert\alpha\vert>N^\gamma).
\]
From Proposition \ref{prop:dev_ineq}, there exists $C_T>0$ such that
\[
0\leq \mathbb{P}((\alpha,\beta,n)\not\in\Lambda_N^\gamma)\leq 2C_Te^{-\frac{T}{2}N^\gamma}=O(N^{-k}),
\]
as expected.
\end{proof}

\begin{lemma}\label{cor:Wilson_loop_exp_partition}
Let $(\alpha,\beta,n)$ be independent random variables such that $\alpha,\beta\sim\mathscr{U}(q_T)$ and $n\sim\mathbb{P}_{\widehat{\U}(1),T}$. Let $\Sigma_{1,T}$ be an orientable compact connected surface of genus $1$ and of area $T>0$, $\ell$ be a contractible simple loop of interior area $t$ oriented clockwise. The following holds true for any fixed $\gamma\in(0,\frac13)$ and for $N$ large enough:
\begin{equation}
\E[\tr(H_\ell)] = \frac{\theta(q_T)}{Z_N(1,T)\phi(q_T)^2}\E\left[\mathbf{1}_{\Lambda_N^\gamma}(\alpha,\beta,n)\Psi_N(\alpha,\beta,n)\right](1+O(N^{3\gamma-1})),
\end{equation}
where
\[
\Psi_N(\alpha,\beta,n)=q_t^{\frac{2}{N}F(\alpha,\beta,n)}\left[\sum_{\alpha'\searrow\alpha}\frac{d^{\alpha'}}{\vert\alpha'\vert d^\alpha}q_t^{1+\frac{2}{N}(c(\square')+n)}+\frac{\vert\beta\vert}{N^2}\sum_{\beta'\nearrow\beta}\frac{d^{\beta'}}{d^\beta}q_t^{-1+\frac{2}{N}(n-c(\square'))}\right].
\]
\end{lemma}

\begin{proof}
From Theorem \ref{prop:wilson_loops_exp_var}, we have
\[
\mathbb{E}[\tr(H_\ell)]= \E_{\widehat{\U}(N),T}\left[\sum_{\substack{\mu\in\widehat{\U}(N): \mu\searrow\lambda}}\frac{d_\mu}{Nd_\lambda}q_t^{c_2(\mu)-c_2(\lambda)} \right].
\]
Let us split $\widehat{\U}(N)$ into
\[
\Omega_N^{1}=\{\lambda\in\widehat{\U}(N):(\alpha_\lambda,\beta_\lambda,n_\lambda)\in\Lambda_N^\gamma\}\ \text{and} \ \Omega_N^{2}=\{\lambda\in\widehat{\U}(N):(\alpha_\lambda,\beta_\lambda,n_\lambda)\not\in\Lambda_N^\gamma\},
\]
meaning that
\[
\E[\tr(H_\ell)]=E_1+E_2,
\]
with
\[
E_i = \E_{\widehat{\U}(N),T}\left[\mathbf{1}_{\Omega_N^i}(\lambda)\sum_{\substack{\mu\in\widehat{\U}(N): \mu\searrow\lambda}}\frac{d_\mu}{Nd_\lambda}q_t^{c_2(\mu)-c_2(\lambda)}\right],\quad i\in\{1,2\}.
\]
Using Propostion \ref{prop:change_variable_partition}, Proposition \ref{prop:casimir2} and Lemma \ref{lem:ratio_dim}, we get
\[
E_1=\frac{\theta(q_T)}{Z_N(1,T)\phi(q_T)^2}\E[\mathbf{1}_{\Lambda_N^\gamma}(\alpha,\beta,n)\Psi_N(\alpha,\beta,n)](1+O(N^{3\gamma-1})).
\]
It remains to prove that the $E_2$ is negligible. First, from Proposition \ref{prop:casimir2} we get that
\[
c_2(\mu)-c_2(\lambda)\geq -1-\frac{1}{N}(\vert\alpha_\lambda\vert+\vert\beta_\lambda\vert+2n_\lambda),
\]
which only depends on $\lambda$, therefore
\[
E_2\leq \E_{\widehat{\U}(N),T}\left[\mathbf{1}_{\Omega_N^2}(\lambda)q_t^{-1-\frac{1}{N}(\vert\alpha_\lambda\vert+\vert\beta_\lambda\vert+2n)}\frac{1}{N}\sum_{\substack{\mu\in\widehat{\U}(N): \mu\searrow\lambda}}\frac{d_\mu}{d_\lambda}\right].
\]
For any $\lambda$, we deduce from Lemma \ref{lem:pieri} that
\[
\sum_{\mu\searrow\lambda}\frac{d_\mu}{d_\lambda}=N,
\]
so that
\[
E_2\leq\E_{\widehat{\U}(N),T}\left[\mathbf{1}_{\Omega_N^2}(\lambda)q_t^{-1-\frac{1}{N}(\vert\alpha_\lambda\vert+\vert\beta_\lambda\vert+2n)}\right].
\]
Using Proposition \ref{prop:change_variable_partition}, we get
\[
E_2 \leq \frac{\theta(q_T)}{Z_N(1,T)\phi(q_T)^2}\E\left[\mathbf{1}_{\Lambda_N^0\setminus\Lambda_N^\gamma}(\alpha,\beta,n)q_T^{\frac{2}{N}F(\alpha,\beta,n)}q_t^{-1-\frac{1}{N}(\vert\alpha\vert+\vert\beta\vert+2n)}\right].
\]
The expectation in the RHS can be bounded by the Cauchy--Schwarz inequality
\begin{align*}
\E\big[\mathbf{1}_{(\Lambda_N^\gamma)^c}(\alpha,\beta,n)&\mathbf{1}_{\Lambda_N^0}(\alpha,\beta,n)q_T^{\frac{2}{N}F(\alpha,\beta,n)}q_t^{-1-\frac{1}{N}(\vert\alpha\vert+\vert\beta\vert+2n)}\big]\\
\leq & \sqrt{\mathbb{P}((\alpha,\beta,n)\not\in\Lambda_N^\gamma)\E\big[\mathbf{1}_{\Lambda_N^0}(\alpha,\beta,n)q_{2T}^{\frac{2}{N}F(\alpha,\beta,n)}q_{2t}^{-1-\frac{1}{N}(\vert\alpha\vert+\vert\beta\vert+2n)}\big]}.
\end{align*}
We can apply Lemma \ref{lem:bound_lambdagamma} for $k=4$ to get
\[
E_2\leq O(N^{-2})\frac{\theta(q_T)}{Z_N(1,T)\phi(q_T)^2}\sqrt{\E\big[\mathbf{1}_{\Lambda_N^0}(\alpha,\beta,n)q_{2T}^{\frac{2}{N}F(\alpha,\beta,n)}q_{2t}^{-1-\frac{1}{N}(\vert\alpha\vert+\vert\beta\vert+2n)}\big]}.
\]
However, the quantity
\[
\frac{\theta(q_T)}{Z_N(1,T)\phi(q_T)^2}\sqrt{\E\big[\mathbf{1}_{\Lambda_N^0}(\alpha,\beta,n)q_{2T}^{\frac{2}{N}F(\alpha,\beta,n)}q_{2t}^{-1-\frac{1}{N}(\vert\alpha\vert+\vert\beta\vert+2n)}\big]}
\]
converges to a finite value as $N\to\infty$, therefore it is bounded uniformly in $N$. We obtain that $E_2=O(N^{-2})$.
\end{proof}

We can now turn to the proof of Theorem \ref{thm:exp} for $g=1$.

\begin{proof}[Proof of Theorem \ref{thm:exp} for $g=1$]
Let $\gamma\in(0,\frac13)$ be a fixed real number. From Lemma \ref{cor:Wilson_loop_exp_partition} and Corollary \ref{cor:fluct_pf},
\[
\vert\E[\tr(H_\ell)]-q_t\vert = \left\vert\E[\mathbf{1}_{\Lambda_N^\gamma}(\alpha,\beta,n)\Psi_N(\alpha,\beta,n)]-q_t\right\vert(1+O(N^{-2})(1+O(N^{3\gamma-1})).
\]
The proof therefore boils down to prove
\begin{equation}\label{eq:estimate_thm12_g1}
\left\vert\E[\mathbf{1}_{\Lambda_N^\gamma}(\alpha,\beta,n)\Psi_N(\alpha,\beta,n)]-q_t\right\vert = O(N^{\gamma-1}).
\end{equation}
In fact,
\begin{align*}
\vert\E[\mathbf{1}_{\Lambda_N^\gamma}(\alpha,\beta,n)\Psi_N(\alpha,\beta,n)]-q_t\vert= &\vert\E[\mathbf{1}_{\Lambda_N^\gamma}(\alpha,\beta,n)\left(\Psi_N(\alpha,\beta,n)-q_t\right)]-q_t\E[\mathbf{1}_{\Lambda_N^0\setminus\Lambda_N^\gamma}(\alpha,\beta,n)]\vert\\
\leq & \vert\E[\mathbf{1}_{\Lambda_N^\gamma}(\alpha,\beta,n)\left(\Psi_N(\alpha,\beta,n)-q_t\right)]\vert + q_t\mathbb{P}((\alpha,\beta,n)\in\Lambda_N^0\setminus\Lambda_N^\gamma).
\end{align*}
First, from Lemma \ref{lem:bound_lambdagamma} we have
\[
q_t\mathbb{P}((\alpha,\beta,n)\in\Lambda_N^0\setminus\Lambda_N^\gamma) = O(N^{-2}).
\]
It remains to bound the other term. For any $(\alpha,\beta,n)\in\Lambda_N^\gamma$, we have
\[
\Psi_N(\alpha,\beta,n)=\sum_{\alpha'\searrow\alpha}\frac{d^{\alpha'}}{\vert\alpha'\vert d^\alpha}q_t^{1+\frac{2}{N}(c(\square')+n)}+\frac{\vert\beta\vert}{N^2}\sum_{\beta'\nearrow\beta}\frac{d^{\beta'}}{d^\beta}q_t^{-1+\frac{2}{N}(n-c(\square'))}.
\]
For any $\alpha'\searrow\alpha$, if we denote by $\square'$ the box added to $\alpha$ to obtain $\alpha'$, we find that
\[
\vert c(\square')\vert \leq \max\{\alpha_1+1,\ell(\alpha)+1\}\leq N^\gamma.
\]
In particular,
\[
\left\vert\frac{c(\square')}{N}\right\vert \leq N^{\gamma-1}.
\]
Hence,
\[
\sum_{\alpha'\searrow\alpha}\frac{d^{\alpha'}}{\vert\alpha'\vert d^\alpha}q_t^{1+\frac{2}{N}(c(\square')+n)}\leq \sum_{\alpha'\searrow\alpha}\frac{d^{\alpha'}}{\vert\alpha'\vert d^\alpha}q_t^{1+\frac{2n}{N}-2N^{\gamma-1}} = q_t^{1+\frac{2n}{N}-2N^{\gamma-1}}.
\]
Analogously,
\[
\sum_{\alpha'\searrow\alpha}\frac{d^{\alpha'}}{\vert\alpha'\vert d^\alpha}q_t^{1+\frac{2}{N}(c(\square')+n)}\geq q_t^{1+\frac{2n}{N}+2N^{\gamma-1}}.
\]
From these inequalities, we find that
\[
\sum_{\alpha'\searrow\alpha}\frac{d^{\alpha'}}{\vert\alpha'\vert d^\alpha}q_t^{1+\frac{2}{N}(c(\square')+n)} = q_t^{1+\frac{2n}{N}}(1+O(N^{\gamma-1})),
\]
where the $O$ is uniform in $\alpha,\beta,n$. We also get, with the same arguments,
\[
\frac{\vert\beta\vert}{N^2}\sum_{\beta'\nearrow\beta}\frac{d^{\beta'}}{d^\beta}q_t^{-1+\frac{2n}{N}-\frac{2c(\square')}{N}} = O(N^{\gamma-2})q_t^{-1+\frac{2n}{N}}(1+O(N^{\gamma-1})) = q_t^{-1+\frac{2n}{N}}O(N^{\gamma-2}).
\]
Combining these estimates leads to
\begin{equation}\label{eq:estimate_FN}
\Psi_N(\alpha,\beta,n)= q_t^{1+\frac{2n}{N}}(1+O(N^{\gamma-1}))+q_t^{-1+\frac{2n}{N}}O(N^{\gamma-2}).
\end{equation}
From there,
\[
\E[\mathbf{1}_{\Lambda_N^\gamma}(\alpha,\beta,n)(\Psi_N(\alpha,\beta,n)-q_t)] = q_t\E\left[\mathbf{1}_{\Lambda_N^\gamma}(\alpha,\beta,n)\left(q_t^{\frac{2n}{N}}(1+O(N^{\gamma-1}))+q_t^{-2+\frac{2n}{N}}O(N^{\gamma-2})-1\right)\right].
\]
\begin{align*}
\E[\mathbf{1}_{\Lambda_N^\gamma}(\alpha,\beta,n)(\Psi_N(\alpha,\beta,n)-q_t)] = & q_t\E\left[\mathbf{1}_{\Lambda_N^\gamma}(\alpha,\beta,n)\left(q_t^{\frac{2n}{N}}-1\right)\right]+ q_t\E[\mathbf{1}_{\Lambda_N^\gamma}(\alpha,\beta,n)q_t^{\frac{2n}{N}}]O(N^{\gamma-1})\\
& + \E[\mathbf{1}_{\Lambda_N^\gamma}(\alpha,\beta,n)q_t^{-1+\frac{2n}{N}}]O(N^{\gamma-2}).
\end{align*}
Using Proposition \ref{prop:fluct_u1}, we find that
\[
\E[\mathbf{1}_{\Lambda_N^\gamma}(\alpha,\beta,n)(\Psi_N(\alpha,\beta,n)-q_t)] = O(N^{\gamma-1}),
\]
which yields \eqref{eq:estimate_thm12_g1}.
\end{proof}

\begin{proof}[Proof of Theorem \ref{thm:var} for $g=1$]
Given $(\alpha,\beta,n)\in\Lambda_N^\gamma$, set
\[
\Lambda_N^{(1)}(\alpha,\beta,n) = \{(\alpha',\beta',n), \alpha'\searrow\alpha,\beta'\searrow\beta\},
\]
\[
\Lambda_N^{(2)}(\alpha,\beta,n) = \{(\alpha',\beta',n), \alpha'\nearrow\alpha,\beta'\nearrow\beta\},
\]
\[
\Lambda_N^{(3)}(\alpha,\beta,n) = \{(\alpha,\beta',n), \beta'\sim\beta,\beta'\neq\beta\},
\]
\[
\Lambda_N^{(4)}(\alpha,\beta,n) = \{(\alpha',\beta,n), \alpha'\sim\alpha,\alpha'\neq\alpha\},
\]
\[
\Lambda_{N}^{(5)}(\alpha,\beta,n)=\{(\alpha,\beta,n)\}.
\]
Using the same arguments as in the proof of Lemma \ref{cor:Wilson_loop_exp_partition}, we can obtain
\[
\E[\vert\tr(H_\ell)\vert^2] = (E_1+E_2+E_3+E_4+E_5),
\]
with
\[
E_i = \frac{\theta(q_T)}{N^2Z_N(1,T)\phi(q_T)^2}\E\left[\mathbf{1}_{\Lambda_N^\gamma}(\alpha,\beta,n)\sum_{(\alpha',\beta',n)\in\Lambda_N^{(i)}}\frac{d_{\lambda_N(\alpha',\beta',n)}}{d_{\lambda_N(\alpha,\beta,n)}}q_t^{c_2(\lambda_N(\alpha',\beta',n))-c_2(\lambda_N(\alpha,\beta,n))}\right].
\]
From Lemma \ref{lem:fluct_pf} we get
\[
E_i = \frac{1+O(N^{-2})}{N^2}\E\left[\mathbf{1}_{\Lambda_N^\gamma}(\alpha,\beta,n)\sum_{(\alpha',\beta',n)\in\Lambda_N^{(i)}}\frac{d_{\lambda_N(\alpha',\beta',n)}}{d_{\lambda_N(\alpha,\beta,n)}}q_t^{c_2(\lambda_N(\alpha',\beta',n))-c_2(\lambda_N(\alpha,\beta,n))}\right].
\]
We shall first estimate $E_1$, then prove that $E_i=O(N^{-2})$ for $i\geq 2$.
\[
E_1=\frac{1+O(N^{-2})}{N^2}\E\left[\mathbf{1}_{\Lambda_N^\gamma}(\alpha,\beta,n)\sum_{\alpha'\searrow\alpha}\sum_{\beta'\searrow\beta}\frac{d_{\lambda_N(\alpha',\beta',n)}}{d_{\lambda_N(\alpha,\beta,n)}}q_t^{c_2(\lambda_N(\alpha',\beta',n))-c_2(\lambda_N(\alpha,\beta,n))}\right].
\]
Using Proposition \ref{prop:casimir2} and Lemma \ref{lem:ratio_dim}, we find on the one hand
\[
E_1 \leq \frac{e^{-t-\frac{t}{2}2N^{\gamma-1}}}{N^2}\E\left[\mathbf{1}_{\Lambda_N^\gamma}(\alpha,\beta,n)\sum_{\alpha'\searrow\alpha}\sum_{\beta'\searrow\beta}
\frac{N^2 d^{\beta'}d^{\alpha'}}{\vert\beta\vert d^{\beta}\vert\alpha\vert d^{\alpha}}\right](1+O(N^{3\gamma-1})).
\]
It can be simplified thanks to Proposition \ref{prop:branch} and Lemma \ref{lem:bound_lambdagamma}:
\[
E_1 \leq e^{-t+\frac{t}{2}2N^{\gamma-1}}\mathbb{P}((\alpha,\beta,n)\in\Lambda_N^\gamma)(1+O(N^{3\gamma-1}))=e^{-t}(1+O(N^{3\gamma-1})).
\]
On the other hand, we also have
\[
E_1\geq e^{-t-\frac{t}{2}2N^{\gamma-1}}\mathbb{P}((\alpha,\beta,n)\in\Lambda_N^\gamma)(1+O(N^{3\gamma-1}))=e^{-t}(1+O(N^{3\gamma-1})).
\]
It follows that $E_1=e^{-t}(1+O(N^{3\gamma-1}))$ as $N\to\infty$. Now, it remains to show that $E_i=O(N^{-2})$ for $i\geq 2$.
\[
E_2 = \frac{1+O(N^{-2})}{N^2}\E\left[\mathbf{1}_{\Lambda_N^\gamma}(\alpha,\beta,n)\sum_{\alpha'\nearrow\alpha}\sum_{\beta'\nearrow\beta}\frac{d_{\lambda_N(\alpha',\beta',n)}}{d_{\lambda_N(\alpha,\beta,n)}}q_t^{c_2(\lambda_N(\alpha',\beta',n))-c_2(\lambda_N(\alpha,\beta,n))}\right].
\]
Using Proposition \ref{prop:casimir2}, Lemma \ref{lem:ratio_dim} and Proposition \ref{prop:branch}, we find
\begin{align*}
E_2\leq & \frac{e^{-t-\frac{t}{2}2N^{\gamma-1}}}{N^2}\E\left[\mathbf{1}_{\Lambda_N^\gamma}(\alpha,\beta,n)\sum_{\alpha'\nearrow\alpha}\sum_{\beta'\nearrow\beta}
\frac{\vert\beta\vert \vert\alpha\vert d^{\beta'}d^{\alpha'}}{N^2 d^{\beta}d^{\alpha}}\right](1+O(N^{3\gamma-1}))\\
\leq & O(N^{-4})\E\left[\mathbf{1}_{\Lambda_N^\gamma}(\alpha,\beta,n)
 \vert\alpha\vert\vert\beta\vert\right] = O(N^{-4}).
\end{align*}
We also get
\[
E_3 = \frac{1+O(N^{-2})}{N^2} \E\left[\mathbf{1}_{\Lambda_N^\gamma}(\alpha,\beta,n)\sum_{\substack{\alpha'\sim\alpha\\\alpha'\neq\alpha}}\frac{d_{\lambda_N(\alpha',\beta,n)}}{d_{\lambda_N(\alpha,\beta,n)}}q_t^{c_2(\lambda_N(\alpha',\beta,n))-c_2(\lambda_N(\alpha,\beta,n))}\right].
\]
However, using Proposition \ref{prop:casimir2}, Lemma \ref{lem:ratio_dim} and Proposition \ref{prop:branch},
\begin{align*}
1 + \sum_{\substack{\alpha'\sim\alpha\\\alpha'\neq\alpha}}&\frac{d_{\lambda_N(\alpha',\beta,n)}q_t^{c_2(\lambda_N(\alpha',\beta,n))}}{d_{\lambda_N(\alpha,\beta,n)}q_t^{c_2(\lambda_N(\alpha,\beta,n))}} \\
= & \sum_{\alpha''\searrow\alpha}\sum_{\alpha'\nearrow\alpha''}\frac{d_{\lambda_N(\alpha'',\beta,n)}q_t^{c_2(\lambda_N(\alpha'',\beta,n))}}{d_{\lambda_N(\alpha,\beta,n)}q_t^{c_2(\lambda_N(\alpha,\beta,n))}}\frac{d_{\lambda_N(\alpha',\beta,n)}q_t^{c_2(\lambda_N(\alpha',\beta,n))}}{d_{\lambda_N(\alpha'',\beta,n)}q_t^{c_2(\lambda_N(\alpha'',\beta,n))}}\\
\leq & q_t^{2-4N^{\gamma-1}}\sum_{\alpha''\searrow\alpha}\sum_{\alpha'\nearrow\alpha''}\frac{Nd^{\alpha''}}{\vert\alpha''\vert d^{\alpha}}\frac{\vert\alpha''\vert d^{\alpha'}}{Nd^{\alpha''}}(1+O(N^{3\gamma-1}))\\
= & q_t^{2-4N^{\gamma-1}}(\vert\alpha\vert+1)(1+O(N^{3\gamma-1})).
\end{align*}
Hence, putting it back into the expectation yields
\[
E_3 = O(N^{-2}).
\]
The estimation of $E_4$ is exactly the same as $E_3$, and finally we have
\[
E_5 = \frac{1+O(N^{-2})}{N^2}\E[\mathbf{1}_{\Lambda_N^\gamma}(\alpha,\beta,n)] = O(N^{-2}),
\]
which concludes the proof. Indeed, combining all estimates from $E_i$, we have
\begin{equation}
\E[\vert\tr(H_\ell)\vert^2] = e^{-t}(1+O(N^{3\gamma-1}))+O(N^{-2})=e^{-t}+O(N^{3\gamma-1}).
\end{equation}
\end{proof}

\section{Large $g$ asymptotics}\label{sec:LargeG}

In this section, we will prove Theorem \ref{thm:large_g}. The main idea will be the same as for Theorem \ref{thm:exp} and \ref{thm:var} for $g\geq 2$, which is that only flat highest weights contribute to the limit. As in Section \ref{sec:LargeN}, we first need an estimate of the partition function. 

\begin{theorem}
For any $T>0$ and any fixed integers $N\geq 2$ and $k\geq 1$, as $g\to\infty$,
\begin{equation}
Z_N(g,T)= \theta(q_T) + O(g^{-k}).
\end{equation}
\end{theorem}

\begin{proof}
Recall that
\[
Z_N(g,T) = \sum_{\lambda\in\widehat{\U}(N)} q_t^{c_2(\lambda)}d_\lambda^{2-2g} = \sum_{n\in\Z} q_T^{n^2} + \sum_{\lambda\in\widehat{\U}(N)\setminus\Lambda_N} q_T^{c_2(\lambda)}d_\lambda^{2-2g}.
\]
However, as $d_\lambda\geq N$ for all $\lambda\notin\Lambda_N$,
\[
0\leq \sum_{\lambda\in\widehat{\U}(N)\setminus\Lambda_N} q_T^{c_2(\lambda)}d_\lambda^{2-2g}\leq N^{2-2g}\sum_{\lambda\in\widehat{\U}(N)}q_T^{c_2(\lambda)}=N^{2-2g}Z_N(1,T).
\]
As a consequence,
\[
Z_N(g,T) = \theta(q_T) + O(N^{2-2g}),
\]
as $g\to\infty$. For any fixed $N$ and $k$, as $g\to\infty,$ $N^{2-2g}$ is dominated by $g^{-k}$, therefore the result follows.
\end{proof}

\begin{proof}[Proof of Theorem \ref{thm:large_g}]
Let us start with the formula \eqref{eq:wilson_loop_exp_sun}. We can split the expectation on $\Lambda_N$ and $\widehat{\U}(N)\setminus\Lambda_N$ as in previous proofs.
\[
\E[\tr(H_\ell)] = \frac{1}{Z_N(g,T)}(A + B_g),
\]
where
\[
A = \sum_{n\in\Z} q_T^{n^2} \sum_{\mu\in\widehat{\U}(N):\mu\searrow(n,\ldots,n)}\frac{d_\mu}{N} q_t^{c_2(\mu)-n^2},
\]
which does not depend on $g$, and
\[
B_g = \sum_{\lambda\in\widehat{\U}(N)\setminus\Lambda_N} q_T^{c_2(\lambda)} \sum_{\mu\searrow\lambda} \frac{d_\mu}{Nd_\lambda^{2g-1}}.
\]
Using again the bound $d_\lambda\geq N$ for $\lambda\not\in\Lambda_N$, we have
\[
B_g \leq N^{2-2g} \sum_{\lambda\in\widehat{\U}(N)\setminus\Lambda_N} q_T^{c_2(\lambda)} \sum_{\mu\searrow\lambda} \frac{d_\mu}{Nd_\lambda}.
\]
From there, it is quite obvious that for any $k$, $B_g=O(g^{-k})$ as $g\to\infty$. Now, let us compute $A$ more precisely: given $n\in\Z$, there is only one $\mu\searrow(n,\ldots,n)$, which is $(n+1,n,\ldots,n)$. It can be written as $\lambda_N(\alpha,\beta,n)$ with $\alpha=(1)$ and $\beta=\varnothing$. It has dimension $N$ and
\[
c_2(\mu)-c_2(n,\ldots,n) = 1 + \frac{2n}{N}.
\]
Hence,
\[
A = \sum_{n\in\Z} q_T^{n^2}  q_t^{1 + \frac{2n}{N}}.
\]
We then obtain
\[
\E[\tr(H_\ell)] = \frac{1}{\theta(q_T)}\sum_{n\in\Z} q_T^{n^2} q_t^{\frac{2n}{N}+1} + O(g^{-k}),
\]
which implies \eqref{eq:exp_largeg} if we use again Theorem \ref{thm:exp}.\\

The proof of \eqref{eq:var_largeg} is quite the same, but derived from \eqref{eq:wilson_loop_var_sun}: we obtain
\[
\E[\vert\tr(H_\ell)\vert^2] = \frac{1}{Z_N(g,T)}(A'+B'_g),
\]
where
\[
A'=\sum_{n\in\Z}q_T^{n^2}\sum_{\mu\sim(n,\ldots,n)}\frac{d_\mu}{N^2}q_t^{c_2(\mu)-n^2},
\]
and
\[
B'_g=O(g^{-k}),\quad \forall k\geq 1.
\]
In order to compute $A'$, let us notice that if $N\geq 2$, two highest weights of $\U(N)$ satisfy $\mu\sim(n,\ldots,n)$: there is $(n,\ldots,n)$ itself, as well as $(n+1,n,\ldots,n,n-1)$. The latter corresponds to $\lambda_N(\alpha,\beta,n)$ with $\alpha=\beta=(1)$. We have
\[
c_2(n+1,n,\ldots,n,n-1)-c_2(n,\ldots,n) = 2. 
\]
Furthermore, one finds that $d_{\lambda_N((1),(1),n)}=N^2$. The expression of $A'$ can be simplified as follows:
\[
A' = \sum_{n\in\Z}q_T^{n^2}\left(\frac{1}{N^2}+q_t^2\right) = \theta(q_T)\left(\frac{1}{N^2}+e^{-t}\right).
\]
The result follows.
\end{proof}

\bibliographystyle{alpha}
\bibliography{AFHW}

\end{document}